\newrobustcmd*\circled[1]{\tikz[baseline=(char.base)]{
            \node[shape=circle,draw,inner sep=1pt,fill,text=white,minimum size=1em] (char) {\textsf{\small #1}};}}
\newcommand{\parlabel}[1]{{\noindent\bf #1}}
\newcommand{\eg}{{\it e.g.}\xspace}
\newcommand{\ie}{{\it i.e.}\xspace}
\newcommand{\etc}{{\it etc.}\xspace}
\newcommand{\etal}{{\it et al.}\xspace}
\newtheorem{theorem}{Theorem}
\newtheorem{lemma}{Lemma}
\definecolor{darkpastelpurple}{rgb}{0.59, 0.44, 0.84}
\definecolor{darkgreen}{RGB}{0, 153, 67}
\newcommand{\del}[1]{}
\newcommand\name{\textit{ROTA-I/O}}
\newcommand\nameS{\textit{ROTA-Sched}}
\begin{document}




\title{\huge\name{}: Hardware/Algorithm Co-design for Real-Time I/O Control with Improved Timing Accuracy and Robustness\thanks{Corresponding author: Shuai Zhao, zhaosh56@mail.sysu.edu.cn.}}

\author{\IEEEauthorblockN{Zhe Jiang\IEEEauthorrefmark{2},
Shuai Zhao\IEEEauthorrefmark{3},
Ran Wei\IEEEauthorrefmark{5},
Xin Si\IEEEauthorrefmark{2},
Gang Chen\IEEEauthorrefmark{3},
Nan Guan\IEEEauthorrefmark{6} \\}
\IEEEauthorrefmark{2}SouthEast University, China,
\IEEEauthorrefmark{3}Sun Yat-Sen University, China,\\
\IEEEauthorrefmark{5}Lancaster University, United Kingdom,
\IEEEauthorrefmark{6}City University of Hong Kong, China \\}


\maketitle
    

\begin{abstract}
In safety-critical systems, timing accuracy is the key to achieving precise I/O control. 
To meet such strict timing requirements, dedicated hardware assistance has recently been investigated and developed.
However, these solutions are often fragile, due to unforeseen timing defects.
In this paper, we propose a robust and timing-accurate I/O co-processor, which manages I/O tasks using Execution Time Servers (ETSs) and a two-level scheduler.
The ETSs limit the impact of timing defects between tasks, and the scheduler prioritises ETSs based on their importance, offering a robust and configurable scheduling infrastructure.
Based on the hardware design, we present an ETS-based timing-accurate I/O schedule, with the ETS parameters configured to further enhance robustness against timing defects.
Experiments show the proposed I/O control method outperforms the state-of-the-art method in terms of timing accuracy and robustness without introducing significant overhead.
\end{abstract}

\section{Introduction}
\label{sc:Introduction}
Input/Outputs (I/Os) are a vital part of safety-critical systems~\cite{iso201126262,hennessy2011computer,jiang2018bluevisor}, as these systems rely on I/Os to interface with sensors and actuators that need to either perceive a hazard in time or make manoeuvres to avoid a hazardous situation~\cite{borgioli2022virtualization}.
For instance, spacecraft attitude control systems rely on precise and time-sensitive I/O operations to obtain and adjust the attitude of the spacecraft~\cite{chai2019six}, while autonomous vehicle engines depend on accurate I/O tasks for optimal fuel injection~\cite{mossinger2010software}.
A lack of I/O accuracy often leads to imprecise control of devices and less effective environmental readings being collected~\cite{zhao2020timing}.
Hence, it is necessary to ensure that I/O operations behave correctly: (i) with \emph{timing accuracy} -- being executed at exact time instants (or at least within a small margin) to achieve precise control~\cite{jiang2017gpiocp,zhao2020timing}; and (ii) with \textit{robustness} -- having the capability to deal with unexpected timing defects to maximise control accuracy~\cite{davis2007robust,burns2001real,davis2011survey}.

In industry, timing-accurate and robust I/O control is mandated by many globally recognised safety standards, including ISO 26262~\cite{iso201126262} and DO-178C~\cite{rierson2017developing}. 
As explicitly stated in clause 4 of ISO 26262, ``\textit{I/Os (e.g., sensors and actuators) must exhibit correct timing, accuracy, and robustness}''~\cite{iso201126262}. 
Additionally, clause 6 of the standard elaborates on the system-level failures that can arise from timing defects in I/O control, underscoring the importance of such requirements.

However, achieving timing accurate I/O control is ever-challenging due to unexpected timing defects, including ill-defined I/O-centric calculations~\cite{ballard2021machine}, underestimated WCET, and transient failures triggered by harsh environments, as recognised by Davids and Burns~\cite{davis2007robust} in typical safety-critical systems.
These unforeseen timing defects~\cite{burns2018robust} can fundamentally compromise I/O's timing accuracy by disrupting the well-arranged precise execution sequences of I/O tasks (\eg, produced by methods in~\cite{abdallah2016contention,kim2014integrated,betti2011real,kim2018supporting,jiang2017gpiocp, zhao2020timing}), and subsequently, cascading disruptions throughout the  I/O, leading to system failures and even catastrophic consequences~\cite{iso201126262}.

Software approaches that consider timing accuracy (\eg, \cite{kim2014integrated,betti2011real,kim2018supporting}) often provide carefully-planned I/O schedules on easily analysable platforms, \eg, uniprocessors and partitioned systems using isolation.
However, these methods have proven to be extremely difficult to achieve for both timing accuracy and robustness on modern platforms, due to ever-increasing hardware and architectural complexities. On such platforms, the significant uncertainty about I/O transmissions, \eg, communication delays and resource contentions~\cite{brandenburg2022multiprocessor,zhao2018thesis,audsley1993applying} occurring from software applications (instigation) to hardware devices (execution)~\cite{jiang2017gpiocp}, can directly disable any pre-planned schedule with an expected task arrival time.

Hardware approaches~\cite{TPUWeb,PRUWeb,jiang2017gpiocp,zhao2020timing} usually employ dedicated assistants to manage I/O operations in proximity to the devices.
By bringing stringent scheduling close to the devices, these approaches mitigate transmission uncertainty, ensuring timing accuracy under some use cases. 
However, to realise such management, these solutions rely on an idealistic assumption that ``\textit{the timing behaviours of the I/O operations must be well-defined}''~\cite{jiang2017gpiocp,zhao2020timing}. 
For systems with timing defects, the scheduling produced by the above approaches would be entirely violated (as explained above).
Thus, in more realistic application scenarios, it is important, but also challenging, to achieve timing-accurate and robust I/O management.

\begin{figure*}[t]
    \centering
    \includegraphics[width=1\textwidth]{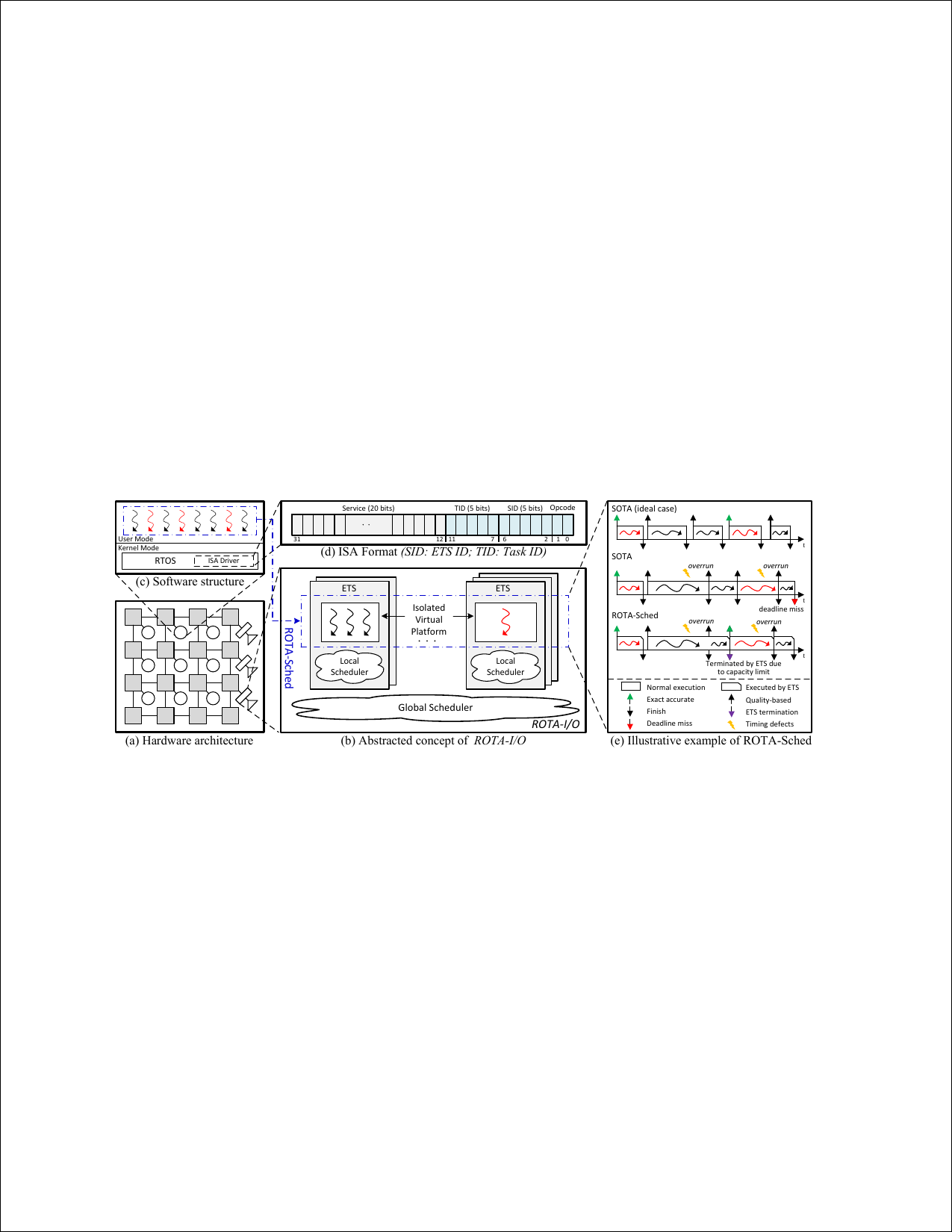}
    \caption{A conceptual overview of \name\ \emph{(square: router; circle: core; rectangle: \name; triangle: device; curved arrows: tasks with different timing guarantees; ETS: Execution Timer Server)}: (a) \name\ serves as a co-processor, enabling I/O management at hardware; (b) \name\ features sets of ETSs and a two-level scheduler, limiting the impact of timing defects while offering a configurable scheduling infrastructure; (c) \nameS{} schedules tasks and configures \name\ using (d) the dedicated ISA; (e) an illustrative mechanism of ROTA-Sched, improving I/O robustness and accuracy.}
    \label{fig:concept}
\end{figure*}


\parlabel{Contributions.}
In this paper, (i) we propose a device-coupled co-processor (\name{}) that manages I/O operations directly at the hardware level, featuring configurable Execution Time Servers (ETSs) and a scalable two-level scheduler.
The ETSs provide temporal isolation between I/O tasks, prohibiting defect propagation. The scheduler dispatches I/O operations hierarchically using the ETSs, establishing a scheduling infrastructure.
(ii) With the new hardware, we present an ETS-based scheduling method (namely \nameS{}) that schedules I/O tasks and configures the ETSs, achieving real-time I/O control with accuracy and robustness.
(iii) We built a systematic full-stack framework from the System-on-Chip (SoC) to the Instruction Set Architecture (ISA) and programming model, forming a complete solution for real-time I/O control.

We deployed our proposed solution on the Xilinx VC709 FPGA and examined it using various metrics.
Experimental results show that compared to State-of-the-Art (SOTA) real-time controllers, \name{} improves acceptance ratio and control quality by 22.61\% and 42.32\% respectively, on average (maximum: 2.18x and 3.06x). 
Synthesising a \name\ yields less than 3\% hardware overhead of a full-featured processor.

\parlabel{Organisation.} 
Sec.~\ref{sc:Overview} describes the overview of \name{}. Sec.~\ref{sc:Design} presents the architectural design of \name{}. Based on the hardware, the \nameS{} is proposed in Sec.~\ref{sc:Analysis}. Sec.~\ref{sc:Experiments} presents the experimental results. Finally, Sec.~\ref{sc:RelatedWork} provides the related work, and Sec.~\ref{sc:Conclusion} concludes the paper.

\section{\name: Overview}
\label{sc:Overview}

\subsection{Top-level Concepts}
\label{sbsc:Concepts}
To achieve timing-accurate and robust control for the I/Os, we adopted a hardware/algorithm co-designed approach (Fig.~\ref{fig:concept}) that tightly couples I/O management to the devices and enables hierarchical scheduling with temporal isolation.
Below, we detail the design and scheduling concepts.

\parlabel{Design concepts.} 
We designed the I/O co-processor (\name) with groups of Execution Time Servers (ETSs) and a two-level scheduler (Fig.~\ref{fig:concept}(b)).
The ETSs manage the I/O tasks and establish an isolated environment for their execution, effectively realising dedicated Virtual Platforms (VP) for task execution~\cite{lee2012realizing,shin2003periodic}.
With the two-level scheduler, I/O tasks are prioritised in a hierarchical manner: a \emph{global scheduler} distributes the time budget from the physical platform to the VPs and determines the \emph{scheduling parameters} (see scheduling concepts) of the VP in each ETS;
each ETS then has a \emph{local scheduler} to prioritise the tasks allocated to that VP using the distributed budget.
The design concepts have the following properties to ensure timing-accurate and robust I/O control: 

 
\begin{itemize}[align=parleft,labelindent=\parindent,leftmargin=*]
      \item \name~facilitates the I/O management being tightly coupled with the devices. 
      This effectively bypasses most of the uncertainty associated with I/O transmissions.
    
      \item The ETSs handle I/O tasks independently, which provides temporal isolation, hence the propagation of unexpected timing defects on the system is prevented.
    
      \item  The two-level scheduler is responsive to the system and environmental changes:
      for changes to one I/O task, the corresponding ETS can be configured and updated locally, without affecting the remaining systems;
      for changes to the taskset, the ETS-based allocation and schedule of I/O tasks can be updated globally.
      
\end{itemize}

\begin{table}[t]
\caption{Examples of the ISA for \name\ control \emph{(Priv: 1 and 0 indicate the kernel and user modes, respectively)}.}
\centering
\resizebox{.999\columnwidth}{!}{%
\begin{tabular}{l|c|l}
\hline
\textbf{Instruction} & \textbf{Priv} & \multicolumn{1}{c}{\textbf{Description}}      \\ \hline
\textbf{\texttt{c.set}}, rs1, rs2     & 1             & Set rs1 budget to ETS rs2.                 \\ 
\textbf{\texttt{c.enr}}, rs1, rs2     & 1             & Enroll ETS rs2 with a start time of rs1.                 \\ \hline
\textbf{\texttt{p.ld}}, rs1, rs2      & 0             & Pre-load the task addressed in rs1 to ETS rs2.    \\
\textbf{\texttt{i.ld}}, rs1, rs2      & 0             & Imm-load the task addressed in rs1 to ETS rs2.    \\
\textbf{\texttt{i.run}}, rs1       & 0             & Run the pre-loaded task addressed in rs1. \\ 
\hline
\end{tabular}}
\label{table:ISA}
\end{table}

\parlabel{Scheduling concepts.} 
With the new hardware design, an ETS-based scheduling algorithm (\nameS) was developed to configure the ETSs and coordinate the executions of I/O tasks on the ETSs.
For a given I/O taskset, the algorithm determines the parameters for the ETSs, including the start time,  budget, \etc{} 
This information is used by the global scheduler to manage the executions of the ETSs. 
The task execution order in each ETS is then produced by \name{}.
If the budget of an ETS has been exhausted due to timing defects, the scheduler will terminate its tasks to mitigate the impact of timing defects on the following ETSs, so that both robustness and timing accuracy can be guaranteed. 

\subsection{ISA Support}
\label{sbsc:ISASupport}
In coping with the reconfigurable characteristics of \name, we developed a dedicated ISA to abstract control interfaces for the software, where the instructions are classified into three categories (see Tab.~\ref{table:ISA}): c-type instructions (\texttt{c.x()}) for ETSs' configurations, 
i-type instructions (\texttt{i.x()}) for tasks requiring immediate-loading, and 
p-type instructions (\texttt{p.x()}) for tasks requiring pre-loading.
For instance, \texttt{c.set()} and \texttt{c.enr()} respectively assign the time budget and the start offset of an ETS, whereas both \texttt{p.ld()} and \texttt{i.ld()} are used for loading I/O tasks. 
The main distinction between these loading instructions lies in their operational scope:
\texttt{i.ld()} includes both the I/O operations and their scheduling parameters, facilitating immediate task execution. 
By contrast, \texttt{p.ld()} loads only the I/O operations, with task execution delayed until the scheduling parameters are defined by \texttt{
i.run()} (see Sec.~\ref{sbsc:ets_config} for details).
Given that the \texttt{c.set()} can cause device contention and bus congestion, it is designed as a privileged instruction, executable only by an operating system (OS) or a hypervisor with a global view of the system.

To ensure ISA coding and decoding efficiency, we standardised a uniform format for these instructions. 
Fig.\ref{fig:concept}(d) demonstrates the ISA format in four sections: the bottom 2 bits serve as the opcode, indicating the instruction type; the subsequent 10 bits represent the operated ETS and tasks; and finally, the top 20 bits convey specific service information.

\subsection{Programming Model}
\label{sc:ProgrammingModel}

With the new ISA introduced above, \name{} can be programmed through three phases, \ie, initialisation, execution, and adaptation.
In the initialisation phase, \nameS{} first determines the ETS parameters, \name{} then configures the ETSs using \texttt{c.set()} and \texttt{c.enr()}, with I/O tasks being pre-loaded via \texttt{p.ld()}.
In the execution phase, processors send run-time I/O tasks to \name\ using \texttt{i.ld()} or trigger the pre-loaded tasks using  \texttt{i.run()}, \name\ then manages these tasks according to the given configurations. 
Moreover, when a change occurs in the environment or the system, \eg, the increased arrival rate of a task, an adaptation will be triggered, where \nameS{} is invoked to identify the affected ETSs and reconfigure their scheduling parameters using \texttt{c.set()}.

\subsection{Integrating \name\ into a SoC}
\label{sbsc:SystemArchitecture}
Deploying \name\ shifts I/O management from the OS kernel (as in conventional embedded/computer architectures) to the hardware, resulting in architectural changes.
Fig.\ref{fig:concept}(a) depicts the integration of \name\ in a multi-/many-core SoC.
Specifically,  the \name\ is physically connected to the device, establishing a scheduling infrastructure for accurate and robust control.
We also connect \name~to the home port of a router via physical links to construct communication channels and timing synchronisation with processors.
Through these connections, tasks running on the processors can intercommunicate with \name\ utilising the ISA in Section \ref{sbsc:ISASupport}.

At the software level, we deploy a Real-Time Operating System (RTOS) in the kernel space, offering a real-time environment for applications that require timing guarantees (see Fig.~\ref{fig:concept}(c)). 
Given \name's comprehensive I/O management, we replace the default I/O manager within the RTOS using a new ISA-compatible driver, providing abstract access interfaces for the applications.
The implementation of the driver is straightforward. This forwards tasks and the scheduling parameters of ETSs (\eg, the period and the budget of an ETS) to \name~by interpreting the ISA.

\noindent Evidently, \name~provides the key to guaranteeing I/O timing accuracy and robustness against unexpected timing defects. 
We now detail the micro-architecture of \name.




\begin{figure}[t]
    \centering
    \includegraphics[width=1\columnwidth]{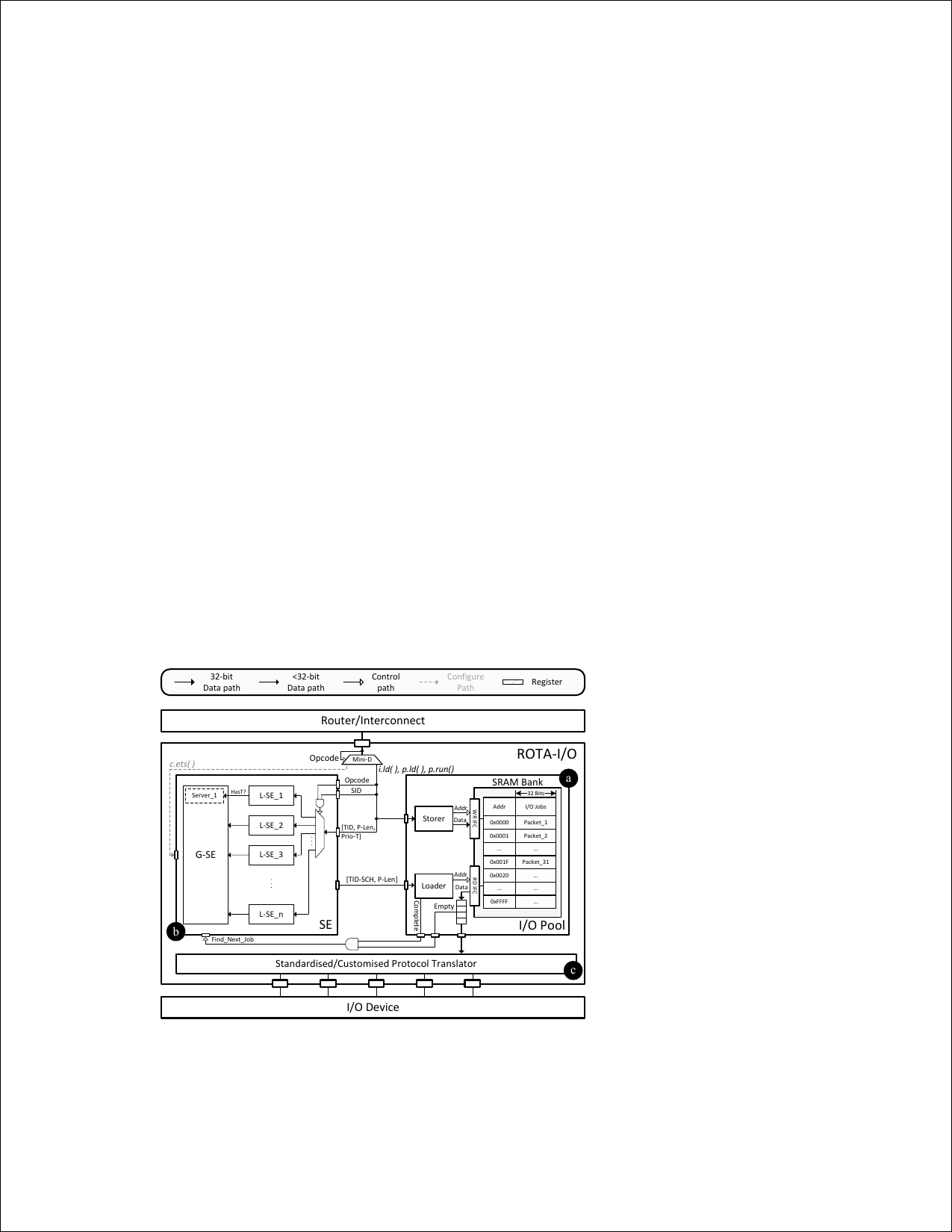}
    \caption{Top-level micro-architecture  of \name\ \emph{(Mini-D: mini decoder; Prio-S/T: priority of server/task; TID: task ID; TID-SCH: scheduled TID)}:
    \circled{a} tasks are maintained in an SRAM-based I/O pool, allowing prioritisation; \circled{b} nested schedulers prioritise the I/O tasks hierarchically; \circled{c} scheduled tasks are translated into physical signals for the device control.}
    \label{fig:top-micro}
\end{figure}

\section{\name: Micro-architecture}
\label{sc:Design}
The \name{} design introduces a set of ETSs and a two-level scheduler, limiting the propagation of timing defects and constructing a configurable scheduling infrastructure.
The top-level micro-architecture of \name{} is depicted in Fig.~\ref{fig:top-micro}, consisting of three key components: an I/O pool, a Scheduling Engine (SE), and a protocol translator.
The I/O pool (Fig.~\ref{fig:top-micro}.\circled{a}) buffers I/O tasks loaded by the processors and supports random access to them, allowing task prioritisation.
The SE (Fig.~\ref{fig:top-micro}.\circled{b}) creates two nested priority queues to schedule I/O tasks in a hierarchical manner.  
The translator (Fig.~\ref{fig:top-micro}.\circled{c}) interprets I/O tasks into specific control signals in the physical layer.
As the \name{} micro-architecture is designed to be compatible with various underlying protocol translators, either standardised or customised translators can be directly instantiated.
Next, we introduce the design details of the I/O pool and the SE.

\subsection{I/O Pool}
\label{sbsc:I/O Pool}
The micro-architecture of the I/O pool (Fig.~\ref{fig:top-micro}.\circled{a}) contains a dual-port SRAM, a pair of SRAM controllers and a FIFO queue.
The SRAM stores the I/O tasks, decomposed as specific I/O operations.
We use the memory address (12 bits) to index these I/O operations -- the upper 7 bits represent the Task ID (TID), and the lower 5 bits give the release order of  I/O operations with the same TID.
We connect the SRAM controllers to each  of the SRAM ports, behaving as a storer and a loader.
During execution,  when the I/O pool receives a \texttt{p.ld()} or an \texttt{i.run()} instruction, the payload will be written into the corresponding addresses of the SRAM using the storer.
In addition, when the I/O pool receives the task ID scheduled by the SE, the I/O operations will be read from the SRAM using the loader and are then pushed into the FIFO queue for subsequent execution by the protocol translator.

\begin{figure}[t]
    \centering
    \includegraphics[width=1\columnwidth]{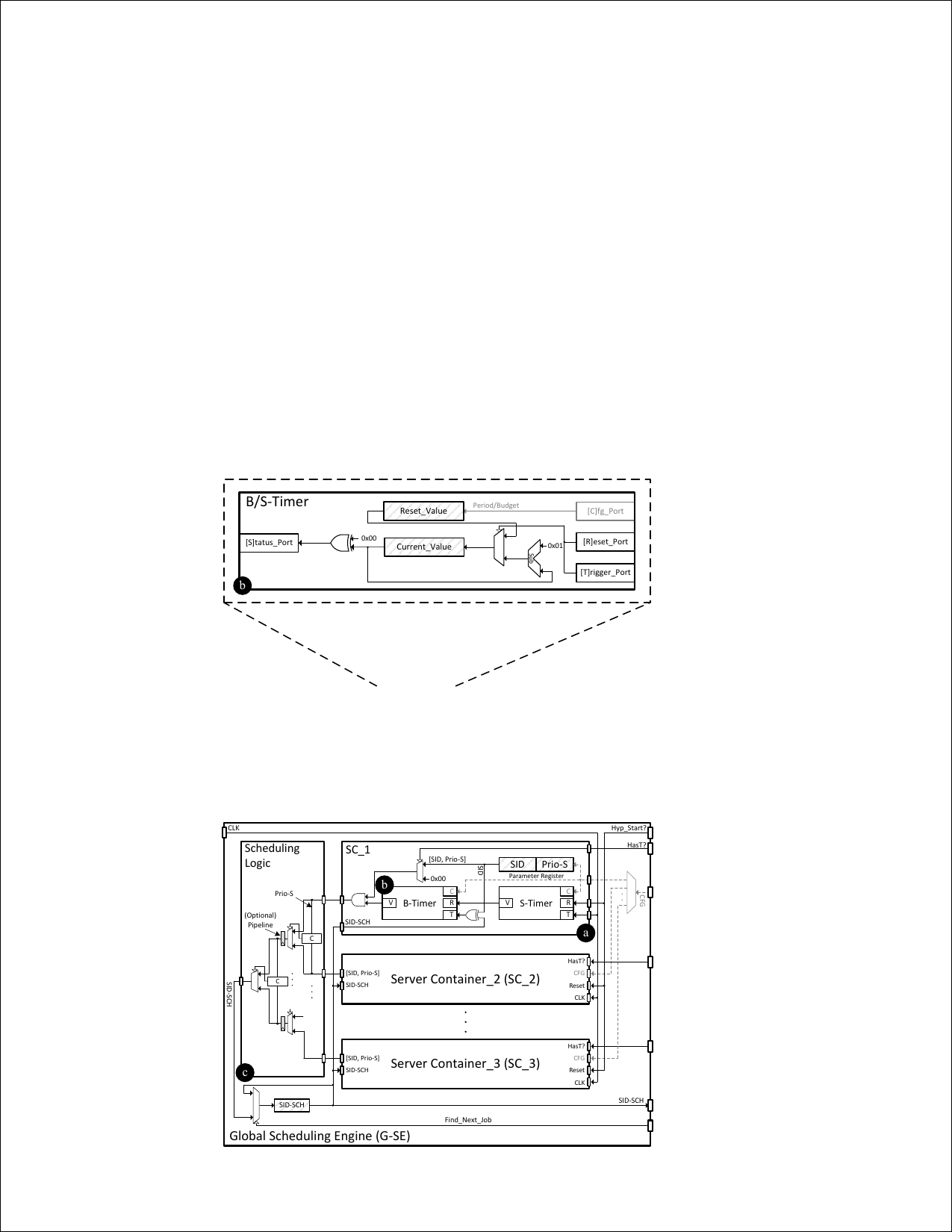}
    \caption{G-SE micro-architecture \emph{(refer to Fig.~\ref{fig:top-micro} for legends; SID-SCH: scheduled ETS ID; C: comparator)}: \circled{a} SCs are one-to-one associated to an ETS, featuring a parameter register to maintain the ETS's priority, and \circled{b} a pair of count-down timers to manage the budget; \circled{c} the ETSs's status is collected and prioritised, and the results are broadcasted to L-SEs.}
    \label{fig:G-SE}
\end{figure}

\subsection{Scheduling Engine (SE)}
\label{sbsc:SE}
The micro-architecture of the SE (Fig.~\ref{fig:top-micro}.\circled{b}) constructs two nested priority queues: an upper-level one, \ie, Global SE (G-SE), prioritises ETSs, deciding which ETS can execute the I/O task at a specific time point, while a lower-level one, \ie, Local SE (L-SE), prioritises the I/O tasks within an ETS, executing them according to their priority.

\parlabel{Global SE (G-SE).}
We designed the G-SE (Fig.~\ref{fig:G-SE}) using a collection of Server Containers (SCs) and scheduling logic.
The SCs are \emph{logically one-to-one} linked to with an ETS, maintaining the priority and time budget of the ETS.
In each SC (Fig.~\ref{fig:G-SE}.\circled{a}), a dedicated parameter register is employed to store the Server ID (SID) and priority of the ETS.
In addition,  a pair of count-down timers are deployed to manage the time budget assigned to the ETS ---  a Budget-Timer (B-Timer) holds the ETS's time budget, and a Start-Timer (S-Timer) maintains the refresh time point within each hyper-period (Fig.~\ref{fig:G-SE}.\circled{b}).
Inside each timer, two registers are deployed to store the reset and current values.
At the interfaces, three input ports and one output port are introduced. 
The input ports are used to program, reset and trigger the timer, and the output port reveals the status of the timer, \ie, whether the current value is greater than `0'. 
With that, the timer's reset value can be updated using the \texttt{c.cfg()} instruction through its program port; 
the timer's current value is reset when its reset port equals `0', and reduced by one when its trigger port meets a \emph{rising edge}. 
To refresh an ETS  with a preserved budget, the reset values in S-Timer and B-Timer are set as the refresh time point and the time budget (by \nameS\ in Sec.~\ref{sc:Analysis}).
We link the B-Timer's reset port to the S-Timer's status port, and the S-Timer's reset port to the global clock. 
This resets the S-Timer at the start of every hyper-period and resets the B-Timer when the S-Timer counts to `0'.

\begin{figure}[t]
    \centering
    \includegraphics[width=1\columnwidth]{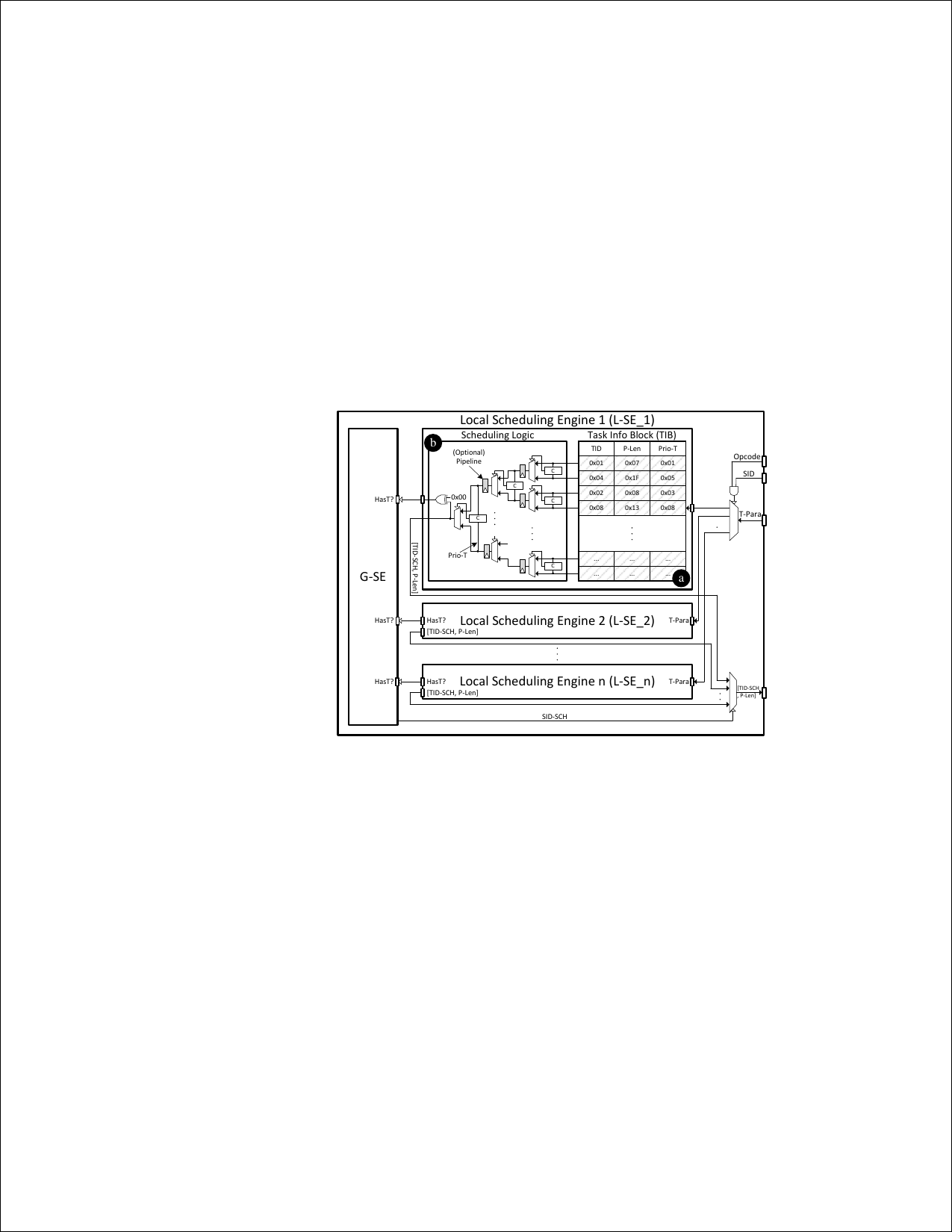}
    \caption{L-SE micro-architecture \emph{(refer to Fig.~\ref{fig:top-micro} for legends; T-Para: task parameters; TID-SCH: scheduled task ID)}: \circled{a} task parameters in the same ETS are stored in a TIB, implemented using a register chain; \circled{b} all entries of the TIB are compared, identifying the task with the highest priority and returning the TID to G-SE, forming the final scheduling decision.}
    \label{fig:L-SE}
\end{figure}

The scheduling logic collects the status of the ETSs and prioritises them.
To achieve this, the SC register is aggregated with the B-Timer output using an AND gate, which checks whether the SC is supplying enough time budget for its associated ETS.
If the ETS has enough time budget, the SID and priority of the ETS are returned; otherwise, 0 is returned.
The scheduling logic utilises pure combinational circuits to compare filtered results  (Fig.~\ref{fig:G-SE}.\circled{c}), selecting the SC and routing its SID to the L-SEs in a fixed cycle.

\parlabel{Local SE (L-SE).}
We designed the L-SE (Fig.~\ref{fig:L-SE}) using a Task Info Block (TIB) and scheduling logic.
The TIB records the parameters of the tasks buffered in the ETS, decomposed from the header of the \texttt{p.ld()} and \texttt{i.run()} instructions, including TID, P-Len and priority (Fig.~\ref{fig:L-SE}.\circled{a}).
We implement the TIB using a register chain to allow parallel accesses.
The scheduling logic deploys combinational circuits to collect the task parameters from the TIB and prioritises them (Fig.~\ref{fig:L-SE}.\circled{b}), returning the TID of the task with the highest priority.
Lastly, a multiplexer is employed to gather scheduling results from both the G-SE and L-SE. 
The outputs of the L-SEs and the G-SE are connected to the multiplexer's data ports and control port, respectively, forming the final scheduling decision.

\subsection{Design Trade-Offs}
\label{sbsc:DesignTrade-Offs}
Using a two-level micro-architecture to design the scheduler ensures  scalability, providing the opportunity to extend its capacity for handling more ETSs, so that the developer only needs to deploy additional L-SEs and SCs (in the G-SE) to manage the introduced contentions.
As the L-SE and SC are encapsulated in dedicated hardware modules and initialised independently, their integration will not introduce significant critical paths.
However, as increasingly more modules are integrated, critical paths may emerge due to the heightened complexity of prioritisation.
A practical solution to address this issue is to incorporate pipeline stages in the scheduling logic, as demonstrated in Fig.\ref{fig:G-SE}.\circled{c} and Fig.~\ref{fig:L-SE}.\circled{b}.

To further improve resource efficiency, it is possible to develop a \emph{monolithic} scheduler to manage all tasks collectively. 
However, this approach would result in a substantial increase in combinational logic, potentially causing critical paths.
This becomes particularly pronounced when scaling the scheduler to support more ETSs, ultimately compromising system-wide scalability.
In Sec.~\ref{sc:Scalability}, we provide a quantitative analysis demonstrating that the proposed micro-architecture can be scaled without leading to any critical paths.

In Sec.~\ref{sc:Scalability}, we provide a quantitative analysis demonstrating that the proposed micro-architecture can be scaled to support over 32 cores, without bringing any critical path.

\section{\nameS{}: An ETS-based Schedule}
\label{sc:Analysis}



This section presents the ETS-based schedule (\nameS{}) constructed based on the \name{}.
By utilising the ETSs provided by~\name{}, the \nameS{} allocates and schedules the I/O jobs in a hyperperiod on ETSs to mitigate the impact of interference caused by timing defects~\cite{shin2004compositional,davis2007robust,davis2009robust}, improving both timing accuracy and robustness of \name{}.
The \nameS{} is constructed based on the system model in Sec.~\ref{sbsc:model} with two major steps: (i)~allocation and scheduling of jobs in ETSs, and (ii) configuration of scheduling parameters of ETSs. 
The first step (Sec.~\ref{sbsc:alloc_sched}) produces a job-level schedule for I/O tasks which provides timing predictability and improved accuracy. 
The second step (Sec.~\ref{sbsc:ets_config}) determines the scheduling parameters for ETSs to further mitigate the impact of timing defects on I/O jobs, enhancing the robustness of \name{}.
With the two steps, we provide a complete ETS-based scheduling solution that achieves robust and timing-accurate I/O control. Notations introduced in this section are summarised in Tab.~\ref{tab:w_notations}.


\begin{figure}[!t]
\centering
\includegraphics[width=.85\columnwidth]{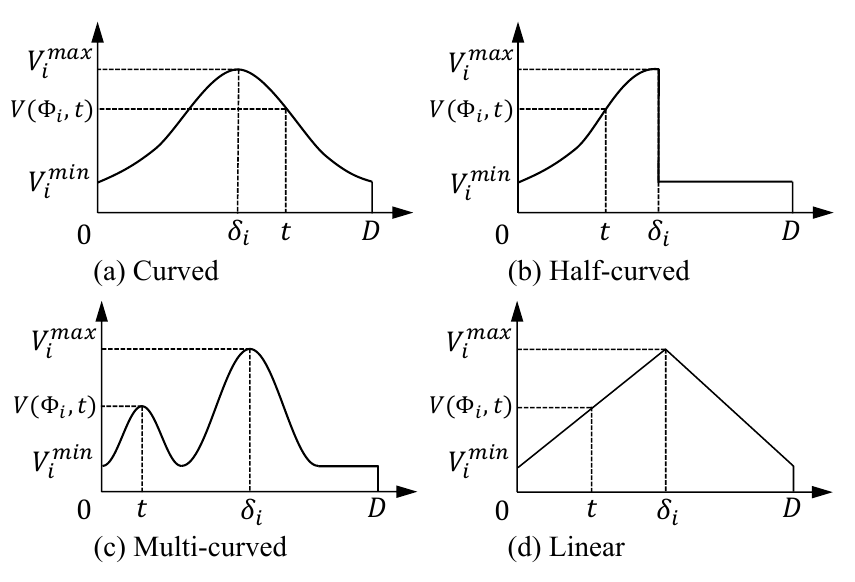}
\caption{Example timing-accurate models of I/O tasks \emph{(x-axis: time relative to the release of a task; y-axis: the resulting I/O control quality given a time instant $t$)}.}
\label{fig:vc}
\end{figure}

%

\subsection{System Model} \label{sbsc:model}
The system has $n$ periodic I/O tasks, denoted $\Gamma=\{\tau_1, \tau_2,...\tau_n\}$, where each task contains a series of sequential I/O operations on \name{}. 
An I/O task is defined by $\tau_i = (C_i, T_i, D_i, \delta_i, \Phi_i)$, in which $C_i$ is the worst-case execution time; $T_i$ is the period; $D_i$ is the deadline with $D_i = T_i$; $\delta_i$ denotes the ideal start offset of $\tau_i$ that achieves the maximum I/O operation quality, \ie, the exact timing accuracy; and $\Phi_i$ is the timing-accurate model associated with $\tau_i$.



In a hyper-period $T^H$ of the system, a task $\tau_i$ can raise up to $N_i = T^H \backslash T_i$ jobs, denoted as $\{\tau_i^1, \tau^2_i, ..., \tau^{N_i}_i\}$. Notation $\tau^j_i$ is the $j\textsuperscript{th}$ job of $\tau_i$, which has a release time of $r^j_i = T_i \times (j-1)$, a deadline of $d^j_i = r^j_i+D_i$, an ideal start offset of $\delta^j_i = T_i \times (j-1) + \delta_i$, and an actual offset $\theta_i^j$ determined by \nameS{}.
Jobs on \name{} are executed under a list schedule non-preemptively.

\parlabel{Timing-accurate Model.} Different from~\cite{zhao2020timing}, which uses a global timing-accurate model for all tasks to quantify the operation quality, tasks in this work can have different models. Fig.~\ref{fig:vc} presents four example timing-accurate models. For instance, Fig.~\ref{fig:vc}(a) shows a typical timing-accurate model~\cite{guerrra2008gravitational,guerra2009gravitational} that can be found in fuel injection in control systems~\cite{mossinger2010software} and the data sampling in automotive systems~\cite{liu2021real}. Fig.~\ref{fig:vc}(b) depicts a model for systems with more strict timing requirements, \eg, the radar scanning in a supersonic fighter~\cite{mei2017real}.
To reflect this generality, we let function $V(\Phi_i, t)$ denote the quality of $\tau_i$ (and its jobs) based on $\Phi_i$ and a start time $t$.
Task $\tau_i$ yields the maximum I/O control quality if it starts at the ideal time, \ie, $V^{max}_i = V(\Phi_i, \delta_i)$. 
We assume that $\Phi_i$ has a single time instant that leads to the exact timing accuracy, and all jobs of $\tau_i$ share the same $\Phi_i$.
In addition, a predictable I/O task can obtain a minimal quality of $V^{min}_i$ (\eg, $V^{min}_i = V(\Phi_i, D_i)$), and a deadline miss leads to zero quality or a negative penalty value.

\parlabel{ETS Model.} 
As described in Sec.~\ref{sc:Overview}, the tasks in the~\name{} are executed using the ETSs to prevent the propagation of timing defects between tasks in two ETSs. 
An ETS in \name{} is defined by $S_k =\{T_k,\lambda_k, \alpha_k\}$, in which $T_k$ gives the period, $\lambda_k$ denotes the budget, and $\alpha_k$ is the start time. 
Once $S_k$ is started, its budget is consumed with the passage of time. 
When the budget is exhausted, $S_k$ is terminated along with all its unfinished tasks. 
Each time when $S_k$ is released, the budget of $S_k$ is replenished to $\lambda_k$. 
To describe the relationship between ETSs and I/O jobs, we let function $G(S_k)$ denote a set of jobs that are allocated to $S_k$.

\begin{algorithm}[t]
$\vartriangleright$\text{ \texttt{\footnotesize Decomposing confliction graphs}}\\
\While{$||E|| > 0$}{
$\tau_i^j = \mathrm{argmax}_{i,j} \{\zeta^j_i ~|~\forall \tau_i^j \in \mathcal{G} \}$;\\
$\mathcal{G}_x = \mathcal{G} \setminus \tau_i^j $;\\
}
$\vartriangleright$\text{ \texttt{\footnotesize Schedule exact-accurate jobs on $S^*$}}\\
\For{{\normalfont each $\tau_i^j \in \mathcal{G}$, earliest $\delta_i^j$ first }}{
$S_k = \{\alpha_k = \delta_i^j,\lambda_k=C_i \}$; \\
$S^*=S^* \cup S_k$; \hspace{6pt} $G(S_k)=\{\tau_i^j\}$;\\
$\theta_i^j = \delta_i^j$; \hspace{33pt}
$\Gamma^H = \Gamma^H \setminus \tau_i^j $;
}

$\vartriangleright$\text{ \texttt{\footnotesize Schedule quality-based jobs on $S^\neg$}}\\
\normalfont generate each $S_k$ for $S^\neg$ using the free time spaces between the ETSs in $S^*$;\\

\For{{\normalfont each $S_k \in S^\neg$, the earliest $\alpha_k$ first}}{
    $\Gamma_k = \{ \tau_i^j ~|~ r_i^j < \alpha_k + \lambda_k \wedge d_i^j > \alpha_k , \forall \tau_i^j \in \Gamma^H\}$;\\
    \For{{\normalfont each $\tau_i^j \in \Gamma_k$, the earliest $d_i^j$ first}}{
        \If{\normalfont{$\tau_i^j$ is feasible with $G(S_k)$ on $S_k$}}{
                $G(S_k) = G(S_k) \cup \tau_i^j$;\\
                $\Gamma^H = \Gamma^H \setminus \tau_i^j $;\\
        }
    }
}

\normalfont \textbf{if} $\Gamma^H \neq \varnothing$, \textbf{return} $\mathrm{infeasible}$;\\

Optimise $\theta_i^j$ of all $\tau_i^j$ of each $S^k \in S^\neg$ based on $\Phi_i$ using a linear search;\\
\Return $\{S^*, S^\neg\}$;


\caption{ETS-based I/O allocation and schedule.}

\label{alg:sched}
\end{algorithm}

\subsection{Allocation and Scheduling of Jobs in ETSs}  \label{sbsc:alloc_sched}
For a given set of I/O jobs in a hyper-period (denoted $\Gamma^H$), \nameS{} produces a feasible ETS-based list schedule with guaranteed timing predictability and improved accuracy.
To achieve this, \nameS{} identifies the set of exact timing-accurate jobs that can yield the highest total quality (\ie, with exact timing accuracy), and assigns a dedicated ETS to guard the execution of each \textit{exact-accurate} job.
The remaining jobs are then scheduled to improve their quality based on $\Phi_i$, where several (\textit{quality-based}) jobs can share one ETS. 
This protects jobs in one ETS from the timing defects in other ETSs and limits the propagation of timing defects within this ETS.
Below, we first detail the identification of the timing-accurate jobs and then present the allocation and scheduling of jobs in the ETSs.
The constructed ETS-based allocation and scheduling process of I/O jobs is presented in Alg.~\ref{alg:sched}. Notations $S^*$ and $S^\neg$ denote the ETSs that manage the execution of the exact-accurate and quality-based jobs, respectively. 


\parlabel{Exact-accurate Jobs.}
To identify the exact-accurate jobs, a conflict graph $\mathcal{G}$ is formulated assuming each job $\tau_i^j \in \Gamma^H$ is executed at the ideal offset $\delta_i^j$.
Then, for two jobs with an execution conflict, an edge is added to indicate they cannot start at their $\delta_i^j$ simultaneously. Let $E(\tau_i^j)$ denote the jobs connected to $\tau_i^j$, the total quality of jobs that $\tau_i^j$ can affect if it starts at $\delta_i^j$ (represented by $\zeta_i^j$) is computed by Equation~\ref{eq:zeta}.
\begin{equation} \label{eq:zeta}
\zeta_i^j = \sum_{\tau_x^l \in E(\tau_i^j)} V^{max}_x
\end{equation}
With $\mathcal{G}$ and $\zeta_i^j, \forall \tau_i^j \in \Gamma^H$ obtained, the algorithm starts by decomposing $\mathcal{G}$ iteratively to identify the exact-accurate jobs based on $\zeta_i^j$ (lines 2-5). 
In each iteration, the $\tau_i^j$ with the maximum $\zeta_i^j$ is removed from $\mathcal{G}$, indicating it cannot be executed with exact timing accuracy, with the condition that $\tau_i^j$ can be fitted between the exact-accurate jobs (if possible). 
Otherwise, no feasible solution would be found during the allocation phase for the quality-based jobs.
This process repeats until jobs left in $\mathcal{G}$ are not connected with each other, \ie, $||E|| = 0$, which are identified as exact-accurate (\ie, $\theta_i^j=\delta_i^j$). This in general increases the number of exact timing-accurate jobs. For each exact-accurate job $\tau_i^j$, a dedicated ETS $S_k$ is assigned to manage its execution with a start time $\alpha_k = \delta_i^j$ and an initial budget $\lambda_k = C_k$ (lines 7-11). 

\begin{table}[t]
\caption{Notations introduced for constructing \nameS{}.}
\label{tab:w_notations}
\resizebox{\columnwidth}{!}{
\begin{tabular}{p{.15\columnwidth}p{.75\columnwidth}}
\hline
\textbf{Notation} & \textbf{Description} \\

\hline
$\Gamma$, $\Gamma^H$ & The sets of I/O tasks and their jobs in a hyperperiod.\\
$S^*$, $S^\neg$ & The sets of ETSs that manage the execution of exact-accurate and quality-based jobs.\\

\hline
$\tau_i$ & A periodic I/O task with an index of $i$.\\
$C_i$, $T_i$, $D_i$ & The worst-case execution time, period, and deadline $\tau_i$.\\
$\Phi_i$, $\delta_i$ & The timing-accurate model and the exact-accurate start offset of $\tau_i$.\\
$V(\Phi_i, t)$ & The quality given $\Phi_i$ and a start offset $t$ of $\tau_i$.\\
$V^{max/min}_i$ & The maximum/minimum quality of $\tau_i$.\\

\hline
$\tau_i^j$ & The $j$\textsuperscript{th} job of $\tau_i$ in a hyperperiod.\\
$r_i^j$, $d_i^j$, $\delta_i^j$ & The release time, deadline, and the exact-accurate (best-case) start offset of $\tau_i^j$.\\
$\zeta_i^j$ & Total quality from jobs affected by $\tau_i^j$ if it starts at $\delta_i^j$.\\
$\theta_i^j$ & The actual start offset of $\tau_i^j$ decided by \nameS{}.\\

\hline
$S_k$ & An ETS with an index $k$.\\
$\alpha_k$, $T_k$,  $\lambda_k$ & The start time, releasing period, and capacity of $S_k$.\\ 
\hline

$G(S_k)$ & The set of I/O jobs being assigned to $S_k$.\\
$||~\cdot~||$ & The size of a given set or list.\\
\hline
\end{tabular}}
\end{table}



\parlabel{Quality-based Jobs.}
With $S^*$ constructed, a set of ETSs for the quality-based jobs (\ie, $S^\neg$) are generated using the free time space between the ETSs in $S^*$. The $\alpha_k$ and $\lambda_k$ of each $S_k \in S^\neg$ are initialised based on the start time and length of the corresponding space (line 13).
For each $S_k \in S^\neg$, the algorithm takes the unallocated jobs that are active during $S_k$ (\ie, $\Gamma_K$ in line 15) and always tries to allocate the job with the earliest $d_i^j$ to $S_k$ (lines 16-21). A simple {feasibility test} is used to decide the acceptance or rejection of this allocation based on the remaining capacity of $S_k$, the allocated jobs $G(S_k)$, and $d_i^j$ (line 17). 
If feasible, $\tau_i^j$ is added to $S_k$ with the earliest possible start offset (lines 18-19). This in general increases the success ratio of the allocation. 
With the mapping of jobs decided, for each job in $S^\neg$ with the latest $\theta_i^j$ first, the algorithm postpones $\theta_i^j$ if a higher quality can be achieved based on $\Phi_i$ to further improve the operation quality of the system (where possible, without affecting other jobs using a linear search, line 24). The algorithm finishes and returns both the $S^*$ and $S^\neg$ with the allocation and scheduling of the I/O jobs embedded. Finally, the I/O jobs are registered to the associated ETS on \name{} using the instruction $\text{{\texttt{p.ld}}}, \tau_i^j, S_k$ introduced in Tab.~\ref{table:ISA}, in which $\tau_i^j \in G(S_k), \forall S_k \in S^* \cup S^\neg$.


\parlabel{Verification.}
The proposed \nameS{} is conducted on each I/O job in a hyper-period and returns a list schedule of the jobs, in which the schedule verifies whether a job can meet its deadline along with the scheduling process. Thus, similar to the job-level schedule in~\cite{chen2019timing,zhao2020timing}, the scheduling algorithm also serves as the timing verification method of the system, in which the system is schedulable if a feasible solution can be obtained by Alg.~\ref{alg:sched}, assuming there exist no timing defects. 


\parlabel{Discussion.}
The time complexity of Alg.~\ref{alg:sched} is $\mathcal{O}(n^2)$, as at most $(||\Gamma^H|| \times ||\Gamma^H||)$ iterations are required to deconstruct $\mathcal{G}$ (\ie, generation of $S^*$) and to allocate as well as schedule jobs in $S^\neg$.
The algorithm returns a feasible ETS-based I/O scheduling solution that improves timing accuracy. If an unexpected timing defect occurs, its impact on I/O jobs is effectively restricted to one ETS, preventing the cascading disruptions on other ETSs in the system. 
However, jobs within the ETS could miss their deadlines or be terminated (if the ETS capacity is exhausted), due to the timing defect. Below, we present the configuration of ETS parameters that further enhance the robustness by mitigating the impact of timing defects on the produced schedule.

\subsection{Configuration of ETS Parameters} \label{sbsc:ets_config}

\parlabel{Initial Configuration.} Based on the allocation and schedule of I/O jobs determined by Alg.~\ref{alg:sched}, the second step of \nameS{} produces the scheduling parameters of the ETSs, \ie, the start time $\alpha_k$, the capacity $\lambda_k$, and the period $T_k$ for each $S_k \in S^* \cup S^\neg$. The $\alpha_k$ is given by the earliest start offset of jobs in $G(S_k)$, \ie, $\alpha_k = \min\{ \theta_i^j ~|~ \forall \tau_i^j \in G(S_k) \}$.
An initial capacity of $S_k$ can be obtained by Equation~\ref{eq:lambda}, which is calculated based on (i) $\alpha_k$ and (ii) the latest finish time of jobs in $G(S_k)$.
As the ETSs are generated for a complete hyper-period, all ETSs in the system have the same period of $T_k = T^H$. In addition, as requested by the scheduler in~\name{}, an earlier ETS (or job) is assigned with a higher priority.
\begin{equation} \label{eq:lambda}
\lambda_k = \max\{  \theta_i^j + C_i ~|~ \forall \tau_i^j \in G(S_k)\} - \alpha_k
\end{equation}

\parlabel{Additional Capacity of $S_k$.} With the above parameters, the ETSs can be scheduled using a list scheduler (as described in Sec.~\ref{sbsc:SE}), in which $S_k$ is always dispatched at $\alpha_k$ and finishes at $\alpha_k+ \lambda_k$ in each hyperperiod.
This provides a necessary time budget for the execution of the jobs in $S_k$, and eliminates the propagation of timing defects that occur in $S_k$ to the following ETSs, as the execution of $S_k$ cannot exceed $\lambda_k$ regardless of the impact from timing defects.
However, for jobs in $S_k$, this capacity is insufficient to cope with unexpected timing defects within $S_k$, leading to unfinished job executions.

To mitigate the impact of timing defects within an ETS, an additional budget is provided to cope with the execution delay, allowing jobs in the ETS an extra chance to finish executions within their deadlines. However, this could result in a delay to the following ETSs (and the I/O jobs). 
Thus, to enable this additional budget while not jeopardising the real-time guarantee of the system, the following two bounds are computed first, for each $S_k \in S^* \cup S^\neg$:
\begin{itemize}
\item \textit{the highest delay} that $S_k$ can incur, denoted as $\Upsilon_k$; and
\item \textit{the largest slack} between $S_k$ and the next ETS (say $S_{k+1}$), denoted as $\Psi_k$.
\end{itemize}

The first bound implies that a certain amount of additional capacity can be provided to the previous ETS (if required), without causing immediate deadline misses in $S_k$ (see Lemma~\ref{lem:delay}). The second provides a limitation on such capacity that prevents potential deadline misses in the following ETS (\ie, $S_{k+1}$) due to the transitive delay effect from $S_k$.


\begin{lemma}\label{lem:delay}
For a given $S_k \in S^* \cup S^\neg$, the highest delay that $S_k$ can incur is $ \Upsilon_k = \min\{ d_i^j - \theta_i^j - C_i | ~\forall \tau_i^j \in G(S_k)\}$.
\end{lemma}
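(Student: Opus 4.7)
The plan is to treat the lemma as a direct consequence of the deadline constraint on each individual job in $G(S_k)$, then show that the minimum across all jobs is exactly tight. First I would fix the meaning of ``delay'': an incurred delay of $\Delta$ on $S_k$ means that every job $\tau_i^j \in G(S_k)$ sees its actual start postponed from $\theta_i^j$ to $\theta_i^j + \Delta$, since the ETS is released later but its internal list schedule (produced by Alg.~\ref{alg:sched}) preserves relative offsets. This uniform shift is crucial and follows from the fact that $S_k$ is a single non-preemptive container dispatched at $\alpha_k$, so a push on $\alpha_k$ propagates identically to every $\theta_i^j$ within.

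Next I would establish the upper bound. For a fixed $\tau_i^j \in G(S_k)$, worst-case completion under delay $\Delta$ is $\theta_i^j + \Delta + C_i$, and to meet the deadline we need $\theta_i^j + \Delta + C_i \le d_i^j$, i.e.\ $\Delta \le d_i^j - \theta_i^j - C_i$. Since this must hold simultaneously for every job in $G(S_k)$, the largest admissible $\Delta$ cannot exceed
\begin{equation*}
\Upsilon_k = \min\{ d_i^j - \theta_i^j - C_i \mid \forall \tau_i^j \in G(S_k)\}.
\end{equation*}
For tightness, I would argue that setting $\Delta = \Upsilon_k$ leaves the binding job (the one attaining the minimum) finishing exactly at its deadline under WCET, while all other jobs still satisfy their individual bounds; hence $\Upsilon_k$ is achievable and the bound is attained. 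Schedulability of the original allocation (guaranteed by Alg.~\ref{alg:sched}) ensures $\Upsilon_k \ge 0$, so the bound is well-defined.

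The main obstacle I anticipate is not the arithmetic but the justification that the delay on $S_k$ indeed translates into a uniform additive shift $\Delta$ on every $\theta_i^j$ within $G(S_k)$, as opposed to being absorbed partially by internal slack between consecutive jobs. This rests on the semantics of the ETS in Sec.~\ref{sbsc:SE}: an ETS executes its embedded list schedule in order with predetermined offsets relative to $\alpha_k$, and a late release of $S_k$ shifts $\alpha_k$, which in turn shifts every $\theta_i^j$ by the same amount. I would briefly invoke this point and the non-preemptive, offset-preserving nature of \nameS{} to close the argument; the rest is the straightforward min over deadline-slack terms shown above.
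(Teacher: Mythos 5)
Your proposal is correct and follows essentially the same route as the paper: both reduce the claim to the per-job deadline constraint $\theta_i^j + \Delta + C_i \le d_i^j$ under a uniform shift of every start offset in $G(S_k)$ and then take the minimum over the jobs. The only cosmetic difference is that the paper justifies the uniform shift as the worst case arising when the jobs are tightly packed with no internal slack to absorb the delay, whereas you derive it from the offset-preserving list-schedule semantics of the ETS; your explicit tightness argument is a small strengthening that the paper leaves implicit.
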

\begin{proof}
Given a list schedule of I/O jobs in $S_k$ (see Alg.~\ref{alg:sched}, where each job is executed based on an explicit start offset), the worst-case delay effect occurs if all jobs are tightly executed, \ie, without any timing gap between the execution of two jobs. In this case, a delay of $\Upsilon_k$ on $\alpha_k$ can directly impose the same amount of latency on $\theta_i^j$ of each job in $S_k$, leading to deadline misses for jobs with $\theta_i^j+C_i+\Upsilon_k > d_i^j$. Therefore, to provide timing guarantee, $\Upsilon_k \leq \min\{d_i^j - \theta_i^j - C_i\}, \forall \tau_i^j \in G(S_k)$ must hold, hence, the lemma follows. 
\end{proof}


\begin{lemma}\label{lem:slack}
For $S_k$ and $S_{k+1}$, 
the largest slack between $S_k$ and $S_{k+1}$ is $\Psi_k = (\alpha_{k+1} + \min\{\Upsilon_{k+1}, \Psi_{k+1}\}) -\alpha_k-\lambda_k$. For the last $S_k$, $\Psi_k=T^H - \alpha_k-\lambda_k$.
\end{lemma}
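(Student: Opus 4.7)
The plan is to prove the lemma by backward induction on the ETS index, starting with the final ETS in the hyperperiod and propagating the slack definition leftwards through the schedule. This structure is natural because the slack of $S_k$ is defined in terms of what the downstream ETSs can absorb, so the recurrence must be unwound from the right.

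First, I would establish the base case for the last ETS $S_k$ in the hyperperiod. Since no ETS follows it, the only binding constraint is that $S_k$ must complete before $T^H$ so that the periodic schedule can restart cleanly. Hence any additional delay given to the previous ETS can push $S_k$'s effective start as late as $T^H - \lambda_k$, yielding $\Psi_k = T^H - \alpha_k - \lambda_k$ immediately.

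For the inductive step, I would assume $\Psi_{k+1}$ is correct and characterise the largest amount of time by which the completion of $S_k$ may exceed $\alpha_k + \lambda_k$ without inducing a deadline miss anywhere downstream. Any such delay shifts the effective start of $S_{k+1}$ from $\alpha_{k+1}$ forward by the same amount; by Lemma~\ref{lem:delay} this shift is internally tolerable for the jobs in $G(S_{k+1})$ only up to $\Upsilon_{k+1}$, while the inductive hypothesis says that shifting $S_{k+1}$ beyond $\Psi_{k+1}$ would in turn propagate a violation into $S_{k+2}$ and later. The largest feasible shift of $S_{k+1}$ is therefore $\min\{\Upsilon_{k+1},\Psi_{k+1}\}$, placing the latest tolerable start of $S_{k+1}$ at $\alpha_{k+1} + \min\{\Upsilon_{k+1},\Psi_{k+1}\}$. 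Subtracting $\alpha_k + \lambda_k$ gives exactly the claimed expression for $\Psi_k$. Finally I would note that both directions hold: taking the minimum is sufficient because neither local nor transitive constraints are violated, and necessary because exceeding either bound directly triggers a miss either in $G(S_{k+1})$ or further down.

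The main obstacle is arguing cleanly that a delay of length $x$ transferred from $S_k$ to $S_{k+1}$ induces exactly a delay of $x$ on the start of $S_{k+1}$ (as opposed to something smaller that could be partially absorbed by idle time inside $S_{k+1}$). This relies on the worst-case tight-packing argument used in the proof of Lemma~\ref{lem:delay}, together with the fact that ETSs are scheduled by a list scheduler that dispatches $S_{k+1}$ as soon as it is released and $S_k$ has released the resource. I would invoke this to conclude that the transitive delay is undiluted, which is precisely what makes the recurrence for $\Psi_k$ take the stated form.
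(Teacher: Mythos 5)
Your proposal is correct and follows essentially the same route as the paper: the slack is obtained as the gap from the earliest finish $\alpha_k+\lambda_k$ of $S_k$ to the latest admissible start of $S_{k+1}$, which is bounded by $\alpha_{k+1}+\min\{\Upsilon_{k+1},\Psi_{k+1}\}$ via Lemma~\ref{lem:delay} (local deadlines) and the recursively defined $\Psi_{k+1}$ (transitive delay). You merely make explicit what the paper leaves implicit — the backward unwinding of the recurrence from the last ETS and the tight-packing argument that the transferred delay is undiluted — which strengthens rather than changes the argument.
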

\begin{proof}
First, given two consecutive ETSs $S_k$ and $S_{k+1}$, the largest lack between them appears when $S_k$ has the earliest finish (\ie, $\alpha_k + \lambda_k$), whereas $S_{k+1}$ has the latest start.
Then, for $S_{k+1}$, its latest start offset is bounded by both $\Upsilon_{k+1}$ and $\Psi_{k+1}$, because (i) the bound of $\Upsilon_{k+1}$ prevents deadline misses in $S_{k+1}$ (proved in Lemma~\ref{lem:delay}), and (ii) $\Psi_{k+1}$ imposes a bound on the transitive delay of the following ETSs (if they exist) that avoids further deadline misses.
\end{proof}



Based on Lemmas~\ref{lem:delay} and~\ref{lem:slack}, Theorem~\ref{the:adcapacity} describes the additional capacity of $S_k$ (denoted as $\omega_k$) without affecting the timing of I/O jobs in the system.

\begin{theorem} \label{the:adcapacity}
For a given $S_k$, it can execute using an additional capacity of $\omega_k = min\{\Upsilon_{k+1}, \Psi_{k+1} \}$ without jeopardising the timing guarantee of the system.
\end{theorem}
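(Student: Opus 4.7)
My plan is to argue that adding an extra budget $\omega_k$ to $S_k$ can at most postpone the start of $S_{k+1}$ by $\omega_k$, and then to show that the choice $\omega_k=\min\{\Upsilon_{k+1},\Psi_{k+1}\}$ keeps this induced delay within both the local deadline margin of $S_{k+1}$ and the cumulative slack bound already available after $S_{k+1}$. The proof therefore reduces to a forward delay-propagation argument that plugs directly into Lemmas~\ref{lem:delay} and~\ref{lem:slack}.

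First, I would formalise the delay effect. The list scheduler of \name{} dispatches $S_{k+1}$ as soon as it is released, but only after $S_k$ has either exhausted its budget or finished all its jobs. Giving $S_k$ an additional budget of $\omega_k$ means $S_k$ can occupy the processor for up to $\lambda_k+\omega_k$ time units in its hyperperiod, so the actual start of $S_{k+1}$ is shifted from $\alpha_{k+1}$ to at most $\alpha_{k+1}+\omega_k$. Nothing earlier than $S_{k+1}$ is affected, since $S_k$ is scheduled after all ETSs with higher priority (earlier $\alpha$). Hence, the problem reduces to showing that a delay of $\omega_k$ applied to $S_{k+1}$ does not compromise any job deadline in $S_{k+1}$ nor in any later ETS.

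Next, I would split the argument into the two constraints captured by the minimum. For the local constraint, Lemma~\ref{lem:delay} guarantees that $S_{k+1}$ can absorb any start-time delay of up to $\Upsilon_{k+1}$ without causing $\theta_i^j+C_i$ to exceed $d_i^j$ for any $\tau_i^j\in G(S_{k+1})$. For the transitive constraint, Lemma~\ref{lem:slack} ensures that $\Psi_{k+1}$ is precisely the cumulative slack that can be consumed between $S_{k+1}$ and the remaining ETSs without violating any of their timing guarantees; a delay of at most $\Psi_{k+1}$ on $S_{k+1}$'s start is therefore safe with respect to $S_{k+2},S_{k+3},\dots$. Choosing $\omega_k\le\min\{\Upsilon_{k+1},\Psi_{k+1}\}$ simultaneously satisfies both bounds, from which the theorem follows.

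The routine part is the delay-propagation step; the only subtle point I would be careful about is verifying that the slack measure $\Psi_{k+1}$ is already defined recursively to incorporate $\Upsilon_{k+2},\Psi_{k+2},\dots$ in Lemma~\ref{lem:slack}, so that bounding the delay by $\Psi_{k+1}$ genuinely suffices for the entire tail of ETSs rather than only for $S_{k+2}$. Once that observation is made, no induction on $k$ is needed beyond what Lemma~\ref{lem:slack} already provides, and the theorem follows in a few lines.
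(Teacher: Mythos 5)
Your proof is correct, but it runs in the opposite logical direction from the paper's. The paper proves Theorem~\ref{the:adcapacity} ``using straightforward counterexamples'': it argues that if $\omega_k$ were set to the \emph{larger} of the two quantities (i.e., $\omega_k=\Psi_{k+1}>\Upsilon_{k+1}$ or $\omega_k=\Upsilon_{k+1}>\Psi_{k+1}$), then either jobs in $S_{k+1}$ miss deadlines directly (by Lemma~\ref{lem:delay}) or the transitive delay harms later ETSs (by Lemma~\ref{lem:slack}). That establishes that exceeding $\min\{\Upsilon_{k+1},\Psi_{k+1}\}$ is unsafe -- essentially a maximality/tightness claim -- and leaves the safety of the minimum itself implicit. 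You instead give a direct sufficiency argument: an extra budget of $\omega_k$ shifts the start of $S_{k+1}$ by at most $\omega_k$, and $\omega_k\le\Upsilon_{k+1}$ handles the local deadlines of $G(S_{k+1})$ while $\omega_k\le\Psi_{k+1}$, via the recursive definition of $\Psi$ in Lemma~\ref{lem:slack}, handles the entire tail $S_{k+2},S_{k+3},\dots$. Your version is the more faithful proof of the statement as written (``$S_k$ \emph{can} execute with additional capacity $\omega_k$ without jeopardising the timing guarantee''), and your explicit observation that the delay bound must propagate through the recursion of $\Psi$ fills in the step the paper glosses over; the paper's counterexample framing buys the complementary fact that $\omega_k$ cannot be enlarged, which your argument does not claim. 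One small point to make explicit in your delay-propagation step: the bound ``$S_{k+1}$ starts no later than $\alpha_{k+1}+\omega_k$'' uses the fact that $\alpha_k+\lambda_k\le\alpha_{k+1}$ in the nominal schedule produced by Alg.~\ref{alg:sched}, which holds by construction but deserves a sentence.
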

\begin{proof}
Following Lemmas 1 and 2, this theorem can be proved using straightforward counterexamples.
If $\omega_k =  \Psi_{k+1} > \Upsilon_{k+1}$, the delay imposed on $S_{k+1}$ from $S_k$ would cause direct deadline misses of I/O jobs in $S_{k+1}$. This is proved in Lemma~\ref{lem:delay}.
Otherwise, \ie, $\omega_k = \Upsilon_{k+1} > \Psi_{k+1}$, the delay incurred by $S_{k+1}$ can transitively affect the execution of the following ETSs, leading to further deadline misses. This is proved in Lemma~\ref{lem:slack}.  
Therefore, the theorem holds. 
\end{proof}

As with $\Psi_k$, 
the computation of $\omega_k$ starts from the last $S_k \in S^* \cup S^\neg$ with $\omega_k = min\{\Upsilon_{k+1}, (T^H - \alpha_k-\lambda_k )\} $, and calculates $\omega_k$ for each $S_k$ backwards to provide the maximum additional capacity possible when being required. 
With the support of this additional capacity,
jobs in $S_{k}$ can execute until the time instant $\alpha_{k+1}+\omega_{k}$ with an additional budget up to $\Psi_{k}$, \ie, the slack between the finish of $S_{k}$ and the latest start of $S_{k+1}$.
In addition, by taking $\Upsilon_{k+1}$ and $\Psi_{k+1}$ into account when computing $\omega_k$, we prevent potential deadline misses due to the transitive delay effect on the following ETSs, \ie, the start of $S_{k+1}$ is delayed with a bound of $\omega_k$.

With $\omega_k$ obtained, the final capacity of $S_k$ is determined as $\lambda_k = \max\{  \theta_i^j + C_i ~|~ \forall \tau_i^j \in G(S_k)\} - \alpha_k + \omega_k$ based on Equation~\ref{eq:lambda}. This provides an extra execution budget for $S_k$ to cope with unforeseen timing defects, while not endangering the timing requirements of I/O jobs. Based on the ISA constructed in Sec.~\ref{sbsc:ISASupport}, the computed $\alpha_k$, $\lambda$, and $T_k$ of $S_k \in S^* \cup S^\neg$ can be configured in \name{} using instructions {$\text{{\texttt{c.enr}}}, \alpha_k, S_k$} and {$\text{{\texttt{c.set}}}, \lambda_k, S_k$}, respectively (see Tab.~\ref{table:ISA}).

To this end, we constructed \nameS{} that schedules the I/O jobs on ETSs provided by the \name{}. The proposed \nameS{} provides the promised timing-accurate I/O control without timing defects. More importantly, using ETSs with the additional capacity, \nameS{} enhances robustness against unforeseen timing defects, as described in Theorem~\ref{the:robustness}.
\begin{theorem}\label{the:robustness}
\nameS{} can tolerate an interference of $\omega = \min\{ 
\omega_k
~|~ \forall S_k \in S^* \cup S^\neg\}$.
\end{theorem}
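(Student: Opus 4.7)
The plan is to prove Theorem~\ref{the:robustness} by establishing that the minimum $\omega_k$ across all ETSs is both a sufficient and necessary tolerance bound, leveraging Theorem~\ref{the:adcapacity} which has already shown that each individual $S_k$ can absorb an interference up to $\omega_k$ without violating timing. The argument is essentially a worst-case quantification: since an unexpected timing defect may occur on any ETS, the system-wide tolerance is limited by the ETS with the least slack.

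First, I would establish the sufficiency direction. Suppose an interference of size $\omega \leq \min\{\omega_k ~|~ \forall S_k \in S^* \cup S^\neg\}$ occurs on an arbitrary ETS $S_k$. By definition $\omega \leq \omega_k = \min\{\Upsilon_{k+1}, \Psi_{k+1}\}$. Using Theorem~\ref{the:adcapacity}, $S_k$ can consume an additional budget up to $\omega_k$ without jeopardizing the timing guarantee of subsequent ETSs. Consequently, the jobs in $G(S_k)$ continue executing within the extended window $[\alpha_k, \alpha_k+\lambda_k+\omega]$, and by Lemma~\ref{lem:delay} applied to $S_{k+1}$, no deadline miss occurs inside $S_{k+1}$; by Lemma~\ref{lem:slack} applied recursively, no transitive delay propagates to cause deadline misses in any later ETS. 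Since $S_k$ was arbitrary, any interference of magnitude $\omega$ placed on any ETS is tolerated.

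Second, I would address why the bound is tight (optional, but it strengthens the claim). If $\omega$ were strictly greater than $\omega_{k^*}$ for some $k^* = \arg\min_k \omega_k$, then an interference placed on $S_{k^*}$ would exceed its absorbable additional capacity. By the converse arguments embedded in the proofs of Lemmas~\ref{lem:delay} and~\ref{lem:slack}, this excess either forces a deadline miss among jobs in $G(S_{k^*+1})$ (when the violation is against $\Upsilon_{k^*+1}$) or propagates transitively to cause a downstream deadline miss (when the violation is against $\Psi_{k^*+1}$). Hence $\omega = \min_k \omega_k$ is the largest value for which tolerance is always guaranteed.

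The main obstacle, if any, will be tightening the semantics of ``tolerate an interference of $\omega$.'' I expect to interpret it as: for any single timing defect of magnitude at most $\omega$ occurring in any one ETS, the remaining ETSs and their jobs still satisfy their deadlines. Under this reading the proof reduces cleanly to a quantification over Theorem~\ref{the:adcapacity}, and no new technical machinery is needed beyond combining Lemmas~\ref{lem:delay} and~\ref{lem:slack} with the definition of $\omega_k$.
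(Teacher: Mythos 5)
Your proposal is correct and follows essentially the same route as the paper: both arguments reduce the claim to a quantification over Theorem~\ref{the:adcapacity}, observing that each $S_k$ absorbs up to $\omega_k$ of interference without harming later ETSs, so the system-wide tolerance is the minimum over all $\omega_k$. Your added tightness discussion goes slightly beyond the paper's proof (which only argues sufficiency), but the core reasoning is identical.
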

\begin{proof}
For a given $S_k \in S^* \cup S^\neg$, it can cope with an additional interference of at most $\omega_k$ without causing a deadline miss in $S_{k+1}$ (if it exists). This is proved in Theorem~\ref{the:adcapacity}. Hence, the system can at least cope with a minimum interference of $\omega$ due to timing defects, while not affecting the timing of I/O jobs of other ETSs with $\omega \leq \min\{\omega_k ~|~ \forall S_k \in  S^* \cup S^\neg\}$. Therefore, the theorem follows.
\end{proof}

Following directly from this theorem, \nameS{} can tolerate an interference of $\omega$ while guaranteeing the timing of each I/O job (including those in the ETS with timing defects) if $\omega < \min\{ 
\min\{\omega_k, \Upsilon_k\}~|~ \forall S_k \in S^* \cup S^\neg\}$. Under this case, the interference caused by timing defects in $S_k$ is not higher than the maximum delay that $S_k$ can incur (\ie, $\Upsilon_k$). Hence, jobs in $S_k$ can meet their deadlines, as proved in Lemma~\ref{lem:delay}.



Finally, it is worth noting that an early ETS that consumes the extra budget (incurs timing defects) can result in less additional budget for later ones, \ie, dynamic use of the additional capacity. 
However, this will not jeopardise the temporal isolation provided by the ETSs, which is effectively bounded by $\Omega_{k}$ for $S_k$ so that the predictability of jobs in later ETSs is not affected. That is, the extended execution of an ETS will not cause any additional deadline misses in later ETSs. 
For instance, if $S_k$ executes until $\alpha_{k+1}+\omega_{k}$, $S_{k+1}$ will have a reduced additional capacity as it must be finished at $\alpha_{k+1}+\lambda_{k+1}$ regardless of the delay. By contrast, if $S_{k}$ finishes execution before $\alpha_{k+1}$, $S_{k+1}$ can still start at $\alpha_{k+1}$ with improved timing accuracy.
This increases the success ratio of the system, where a job can execute and deliver using the additional budget when timing defects occur. This is further justified in Sec.~\ref{sec:results_performance} by experimental results.

\newcommand\BSRTOS{BS$|$RTOS}
\newcommand\BSGPIOCP{BS$|$GPIOCP}
\newcommand\BSRTIOC{BS$|$RTIOC}
\newcommand\BSGA{BS$|$GA}

\section{Evaluation} 
\label{sc:Experiments}
\parlabel{Experimental platform.}
We built the \name\ on a Xilinx VC709 evaluation board. The I/O controller was implemented using Chisel (v.3.4) and connected to a $5 \times 5$ mesh type NoC (Fig.~\ref{fig:concept} (a)). 
As well as the \name{}, the NoC also contained 8/16 open-source RISC-V processors~\cite{asanovic2016rocket}, shared L2 cache (512 KB), external memory (4 GB), and I/O devices.
We instantiated the processors with a 5-staged pipeline and single-width dispatch.
All the hardware elements were synthesised, placed, and routed using Xilinx Vivado (v.2022.2).
The software executing on the processors (OS, drivers, and applications) was compiled using a RISC-V GNU tool-chain (v.2022.11). 
We selected FreeRTOS (v.10.4) as the OS kernel, with the modifications that
were introduced in Sec.~\ref{sbsc:SystemArchitecture}.
To enable comparisons, we introduced 3 Baseline Systems (\textbf{BS}s) on similar hardware, replacing \name\ with different I/O controllers:
\BSRTOS~\cite{FreeRTOS} is a legacy BS with the standard I/O controller, leaving I/O scheduling and management to the RTOS at software level;
\BSGPIOCP{}~\cite{jiang2017gpiocp} and \BSRTIOC{}~\cite{zhao2020timing} were built upon the SOTA real-time I/O controllers with different methodologies (see Sec.~\ref{sc:RelatedWork}).
All systems ran at 100 MHz due to the use of an FPGA as a prototyping platform.

\subsection{Hardware Overhead}
\label{sbsc:HardwareOverhead}
\parlabel{Experimental setup.}
We configured \name~to support 8 ETSs and compared its overhead with a standard Ethernet controller and the real-time controllers used in the \BSRTOS, \BSGPIOCP\ and \BSRTIOC, respectively.
The Ethernet controller was chosen from the Xilinx IP library with default settings, and the GPIOCP and RT-IOC were instantiated to ensure the same capacity with \name.
In addition, to examine the overhead from a system perspective, we compared \name\ against two general-purpose RISC-V processors (Rocket~\cite{asanovic2016rocket} and Boom~\cite{zhao2020sonicboom}).
Rocket was configured using the settings described in the experimental platform. 
Boom had all the features of the Rocket processor, with the extra support of 3-width instruction-level parallelism.
All components were compared using LUTs, registers, and BRAMs.
All components were synthesised and implemented by Xilinx Vivado (v.2022.2) and compared using Look-Up-Tables (LUTs), registers, and BRAMs.
Since these metrics were evaluated using various units, we the results using the Ethernet controller: 4,393 LUTs, 5,113 registers, 2 DSPs and 16KB BRAMs.

\parlabel{Obs 1.} \name~used less hardware than other controllers.
From the SoC's perspective, the overhead was trivial.

This observation is given in Fig.~\ref{fig:Overhead_Small}(a), the implementation of \name~consumed similar hardware compared to the standard Ethernet controller: 108.2\% LUTs, 83.9\% registers, 100.0\% BRAMs.
When compared to the other SOTA real-time controllers, \name\ demonstrated lower overhead across all metrics: GPICOP (92.6\% LUTs, 74.4\% registers, 50\% BRAMs) and RT-IOC (82.3\% LUTs, 72.9\% registers, 50\% BS`RAMs).
The improvement is attributed to the resource-efficient micro-architecture introduced in Sec.~\ref{sc:Design}.
From the SoC's view, \name's overhead is negligible: when compared with the general-purpose Rocket and Boom cores, \name\ only required 20\% and 3\% overhead, respectively.

\begin{figure}[t]
    \centering
    \hspace{-10pt} \includegraphics[width=1\columnwidth]{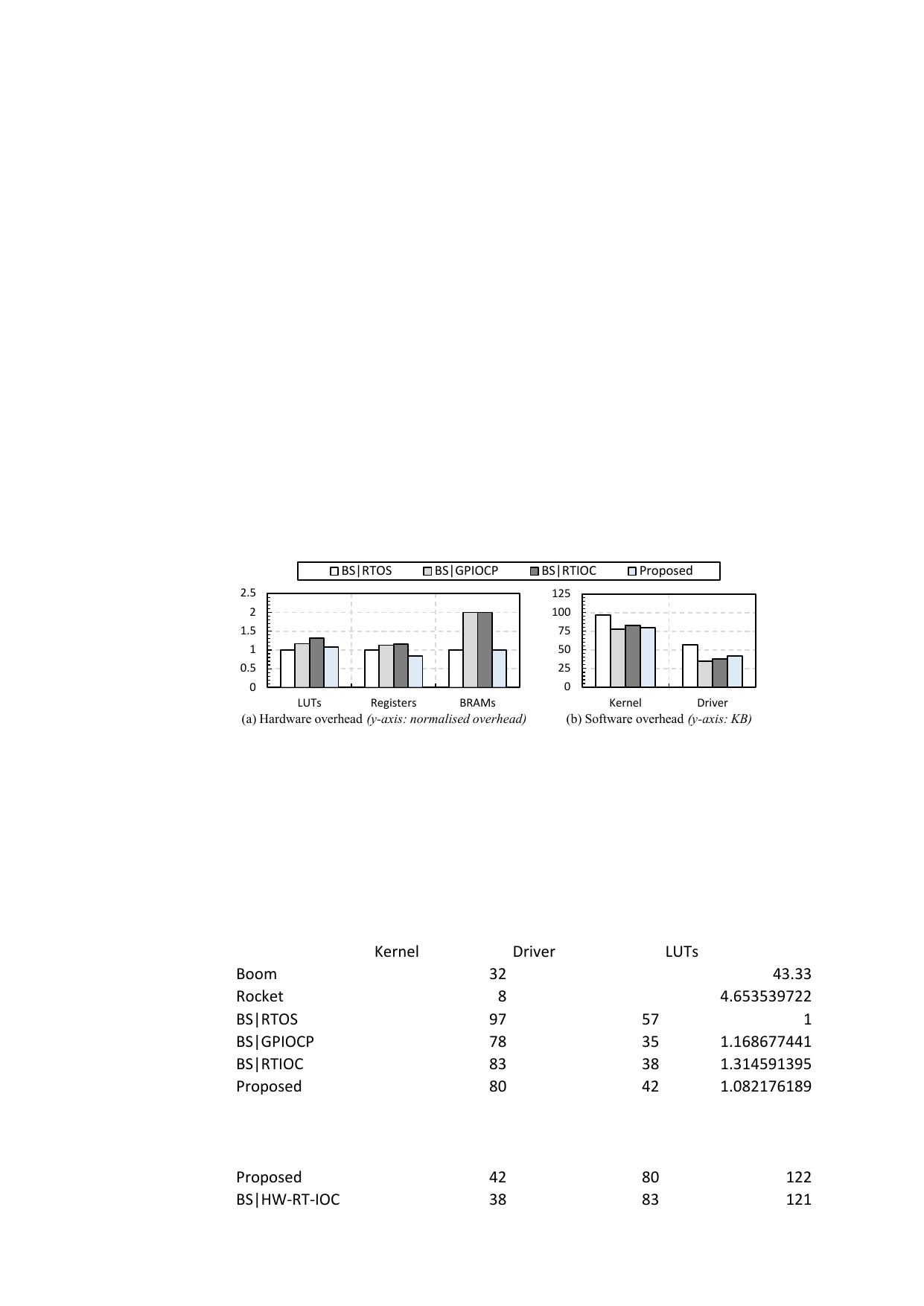}
    \caption{Analysis of hardware and software overheads.}
    \label{fig:Overhead_Small}
\end{figure}

\begin{table}[t]
\centering
\caption{System schedulability under varied $U$.}
\label{tab:sched_res}
\resizebox{.98\columnwidth}{!}{%
\begin{tabular}{c|lllllll}
$U=$ & 0.2  & 0.3  & 0.4  & 0.5  & 0.6  & 0.7  & 0.8  \\
\hline
\multicolumn{1}{c}{} &  \multicolumn{7}{c}{$P_r = 0$,~~$P_e = 0$}\\
\hline
BS$|$GPIOCP &  0.69 &	0.42 &	0.23 &	0.09 &	0.02 &	0.01 &	0.00
\\
BS$|$GA & 0.99	&0.95	&0.84	&0.69&	0.53	&0.38&	0.20
\\
BS$|$RT-IOC      & 0.87 & 0.73 & 0.59 & 0.41 & 0.28 & 0.19 & 0.10 \\
\textbf{Proposed} & \textbf{0.86} & \textbf{0.70} & \textbf{0.55} & \textbf{0.38} & \textbf{0.21} & \textbf{0.11} & \textbf{0.05} \\
\hline
\multicolumn{1}{c}{} & \multicolumn{7}{c}{$P_r = 0.3$,~~$P_e = 0.5$}\\
\hline
BS$|$GPIOCP &  0.59	 & 0.3 &	0.1	 & 0.03	 & 0.00	 & 0.00	 & 0.00  \\
BS$|$GA & 0.74	 &0.45	  &0.22	 & 0.08	 & 0.01	 &0.00	  & 0.00 
 \\
BS$|$RT-IOC      & 0.72 & 0.45 & 0.24 & 0.09 & 0.02 & 0.00 & 0.00 \\
\textbf{Proposed} & \textbf{0.77} & \textbf{0.53} & \textbf{0.33} &\textbf{0.17} & \textbf{0.06} & \textbf{0.02} & \textbf{0.00}
\end{tabular}}
\end{table}

\subsection{Software Overhead}
\label{sbsc:SoftwareOverhead}
Deploying \name\ requires kernel changes and a new driver (Sec.~\ref{sbsc:SystemArchitecture}), hence we examined the software overhead.

\parlabel{Experimental setup.}
We examined the software overhead via run-time memory footprint, with consideration of the OS kernel and the I/O drivers (unit: KB). 
The vanilla kernel used in the examined systems was fully featured with the software I/O manager and essential I/O drivers~\cite{FreeRTOS}. 
The memory size tool was RISC-V GNU tool-chain.

\parlabel{Obs 2.} \name\ consumed less software than \BSRTOS. Its overhead was similar to \BSGPIOCP\ and \BSRTIOC.

In Fig.~\ref{fig:Overhead_Small}(b), the OS kernel and drivers required by \name\ consumed 32 KB (20.8\%) less memory than the software solution due to its hardware-implemented I/O management. 
Compared to other solutions, the memory usage of  \name~was similar: 107.9\% (\BSGPIOCP) and 98.7\% (\BSRTIOC).

\subsection{I/O-level Timing Performance} \label{sec:results_performance}
This section evaluates the timing performance of the proposed I/O control method against the \textbf{BS}s and the Genetic Algorithm described in~\cite{zhao2020timing} (\BSGA) that optimises the timing performance of I/O tasks. As with~\cite{zhao2020timing}, the timing accuracy was measured as the \textit{percentage of exact-accurate jobs} and \textit{operation quality} based on $\Phi_i$ normalised by the maximum quality achievable (\ie, $\sum_{\tau_i^j \in \Gamma^H} V_{max}$).


\parlabel{Experimental setup.}
The number of I/O tasks was set to $n=[4,16]$, and the utilisation of tasks was generated by the UUniFast algorithm~\cite{bini2005measuring} with a total system utilisation given by $U =0.05 \times n$.
For each $\tau_i$, $T_i$ was generated randomly with a uniform distribution with a hyper-period of $1440ms$, and $D_i=T_i$.
The $\Phi_i$ of a task was randomly chosen from examples in Fig.~\ref{fig:vc}, with $V_{max}$ and $V_{min}$ randomly decided from $[1, 100]$ and $[0, V_{max}]$, respectively.
In addition, two control variables were defined for timing defects: $P_r$ controlled the percentage of jobs that incurred timing defects and $P_e$ gave the interference incurred by such a job, quantified as $C_i \times (1 + P_e)$. 
For each system configuration, 1,000 systems were randomly generated and evaluated. 

\begin{figure}[t]
\centering
\hspace{-15pt} \includegraphics[width=1.035\columnwidth]{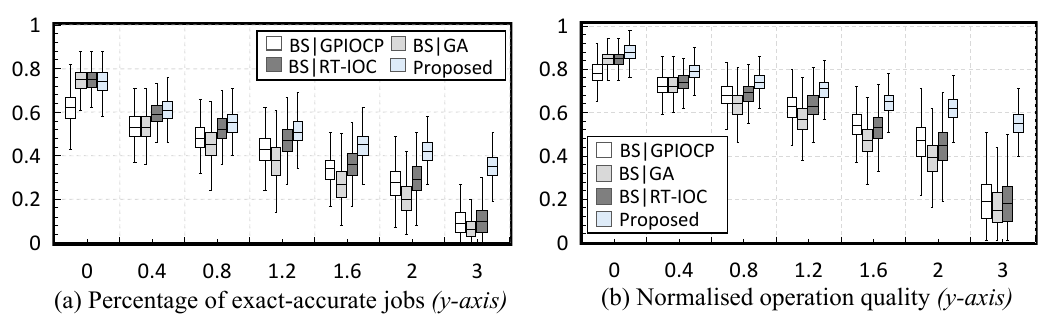}
\caption{Timing accuracy with $U=0.6$, $P_r=0.3$ \small{\emph{($x$-axis: $P_e$)}}. }
\label{fig:accuracy_td}
\end{figure}

\begin{figure}[t]
\centering
\hspace{-15pt} \includegraphics[width=1.035\columnwidth]{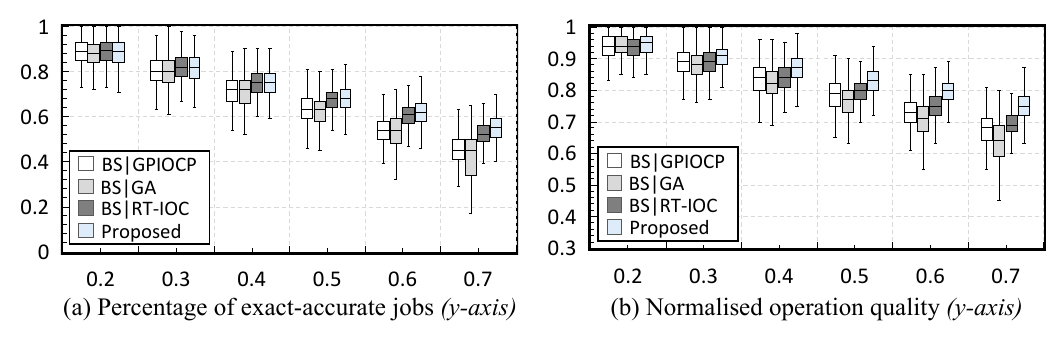}
\caption{Timing accuracy with $P_r = 0.3$, $P_e = 0.5$ \small{\emph{($x$-axis: $U$)}}.}
\label{fig:accuracy}
\end{figure}

\parlabel{Obs 3.}  \name{} showed higher schedulability and timing accuracy compared to SOTA methods with timing defects.

This observation is seen in Table~\ref{tab:sched_res} and Fig.~\ref{fig:accuracy}, showing the resulting system schedulability and timing accuracy under varied $U$, respectively. 
Table~\ref{tab:sched_res} shows that the \BSGPIOCP{} was constantly outperformed by other methods due to its simple FIFO-based scheduling strategy~\cite{zhao2020timing}.
In addition, we noticed that although \name{} shows a slightly lower schedulability than \BSRTIOC{} without timing defects, it outperforms all SOTA methods when timing defects occur. This is because the \BSRTIOC{} produces the schedule by a bin-packing approach, without considering the tightness of deadlines of I/O jobs, hence, is more fragile to timing defects. 
\BSGA{} tends to produce a tight schedule where jobs are executed close to each other. Thus, although \BSGA{} has the highest schedulability when $P_r=P_e=0$, it showed the most pronounced fall with timing defects.
By contrast, the ETS-based schedule (\ie, \nameS{}) that explicitly executes the job with the earliest $d_i^j$ first (line 16 in Alg.~\ref{alg:sched}), hence, is less likely to be affected by timing defects compared to SOTA methods.

\begin{figure*}[t]
\centering
\hspace{-.02\textwidth}
\subfigure[8-core systems \emph{($x$-axis: U; $y$-axis: SR)}.] {\label{fig:successratio}
\includegraphics[width=.30\textwidth]{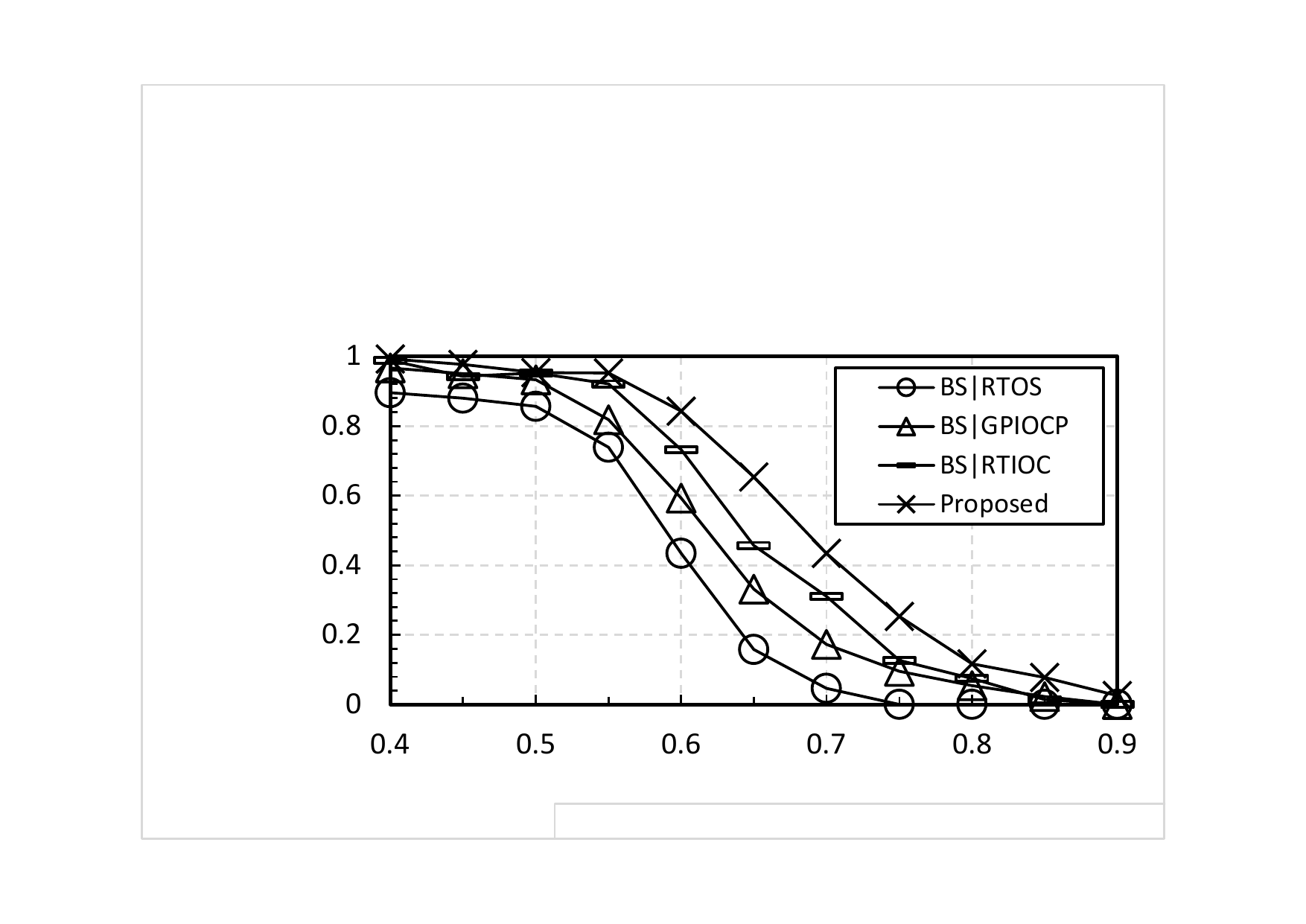}}
\hspace{.02\textwidth}
\centering
\subfigure[32-core systems \emph{($x$-axis: U; $y$-axis: SR)}.]{\label{fig:SR8}
\includegraphics[width=.30\textwidth]{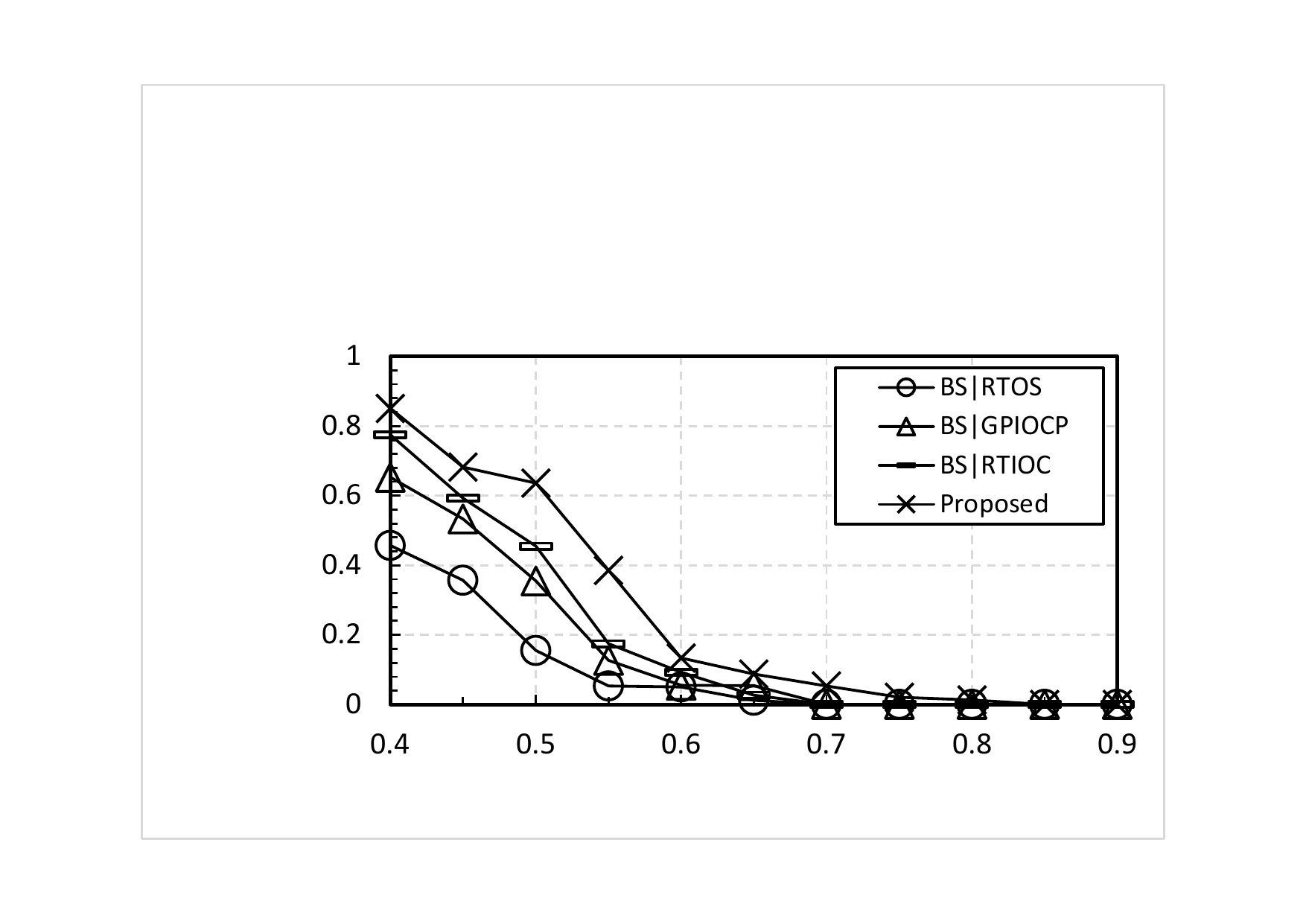}}
\hspace{.02\textwidth}
\centering
\subfigure[{I/O-QoS \emph{($y$-axis: normalised I/O-QoS).}}] {\label{fig:correctnessratio}
\includegraphics[width=.30\textwidth]{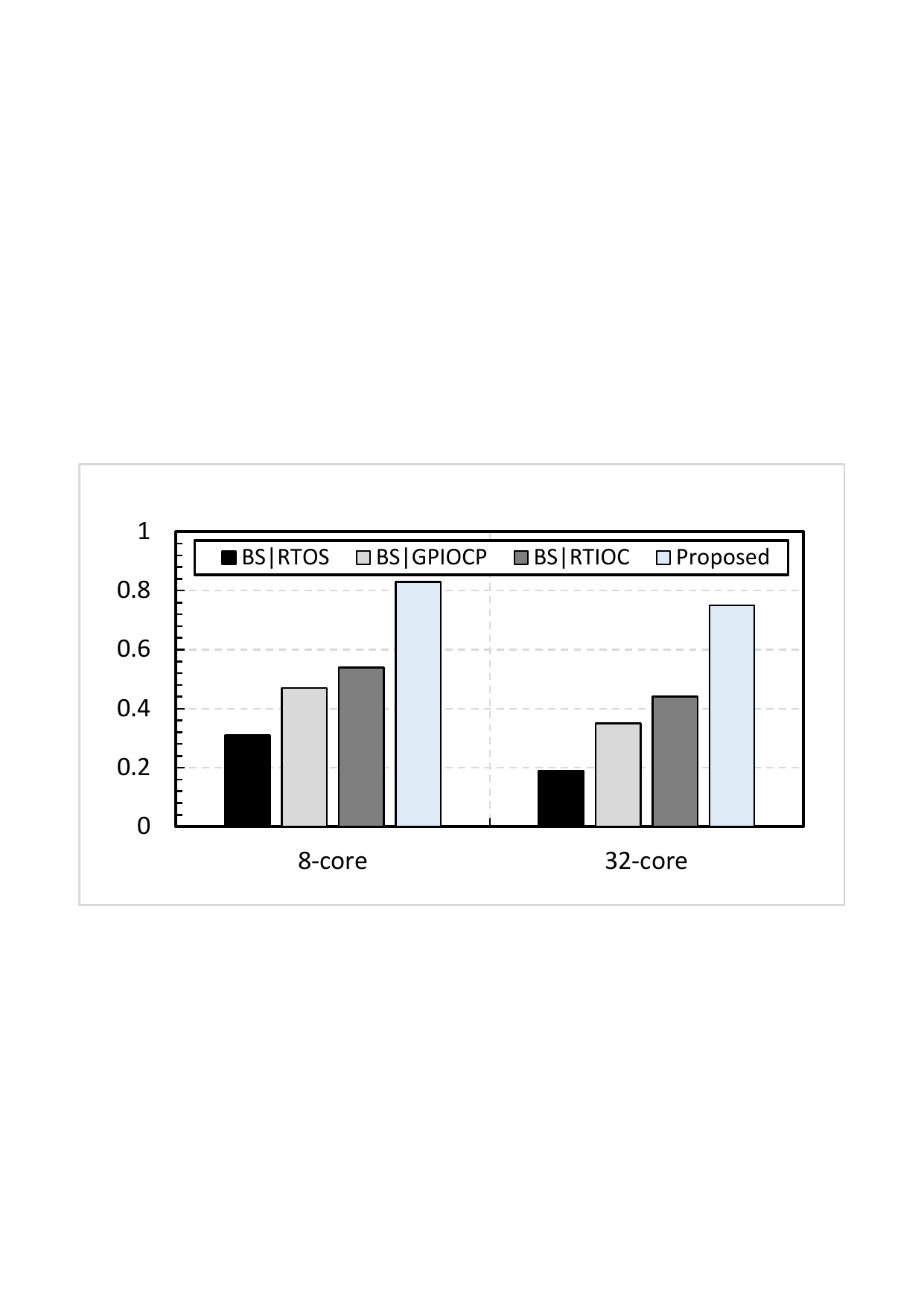}}
\caption{Case study, evaluating system-level real-time performance (in (a) and (b)) and I/O quality (in (c)).}
\label{fig:exp_case}
\end{figure*}

In addition, as shown in Fig.~\ref{fig:accuracy}, 
\name{} demonstrated both a higher percentage of exact-accurate jobs and the strongest overall quality in the general case,
especially when the system was heavy-loaded ($U \geq 0.6$).
This is because, with the ETSs applied, \name{} effectively mitigates the impact of timing defects on the start offset of the I/O jobs, leading to higher timing accuracy compared to the SOTA methods. By contrast, the timing accuracy of the SOTA methods was significantly undermined by timing defects.
For instance, \BSGA{} (the method with the highest quality in~\cite{zhao2020timing}) was outperformed by both \BSRTIOC{} and \BSGPIOCP{} as shown in Fig.~\ref{fig:accuracy} in the general case. This observation revealed the impact of timing defects on existing timing-accurate I/O schedules, and justified the necessity for developing robust I/O control methods.

\parlabel{Obs 4.} \name{} improves robustness compared to SOTA methods when impact of timing defects becomes significant.

This observation is seen in both Table~\ref{tab:accept} and Fig.~\ref{fig:accuracy_td}, which report the acceptance ratio (the percentage of jobs that meet their deadlines) and the timing accuracy of systems that are deemed schedulable by the competing methods, with an increasing impact (\ie, $P_e$) of timing defects.
From the results, \name{} showed both a higher acceptance ratio and timing accuracy compared to SOTA methods. For instance, \name{} outperformed \BSRTIOC\ by 22.61\% and 42.32\% on average (2.18x and 3.06x when $P_e = 3$), in terms of acceptance ratio and operation quality, respectively.
One notable reason is that, with the additional budget assigned for the ETSs, \name{} allows more jobs to finish while providing the necessary temporal isolation for jobs in the following ETSs. 
This justified the dynamic use of the additional budget (see Sec.~\ref{sbsc:ets_config}).
By contrast, the SOTA methods do not take unexpected interference from timing defects into account. Thus, these methods are highly sensitive to timing defects, which significantly affect their pre-planned schedule and resulting timing performance, especially when $P_e \geq 1.2$ as shown in Fig.~\ref{fig:accuracy_td}. 


\begin{table}[t]
\centering
\caption{Acceptance ratio with $U=0.6$ and $P_r=0.3$.}
\label{tab:accept}
\resizebox{.98\columnwidth}{!}{%
\begin{tabular}{c|ccccccc}
$P_e=$ & $0$	&$0.4$		&$0.8$&$1.2$&$1.6$&$2.0$ &3.0\\
\hline
BS$|$GPIOCP	&1.00	&0.99	&0.97	&0.94	&0.88	&0.8	&0.43

\\
BS$|$GA	&1.00	&0.99	&0.95	&0.91	&0.82	&0.72	&0.38

\\
BS$|$RT-IOC      & 1.00	&0.98	&0.96	&0.92	&0.84	&0.75	&0.40\\
\textbf{Proposed} & \textbf{1.00}	& \textbf{0.99}	& \textbf{0.98}	& \textbf{0.96}	& \textbf{0.93}	& \textbf{0.92} & \textbf{0.87}
\\
\end{tabular}}
\end{table}

\begin{figure}[t]
    \centering
    \includegraphics[width=1\columnwidth]{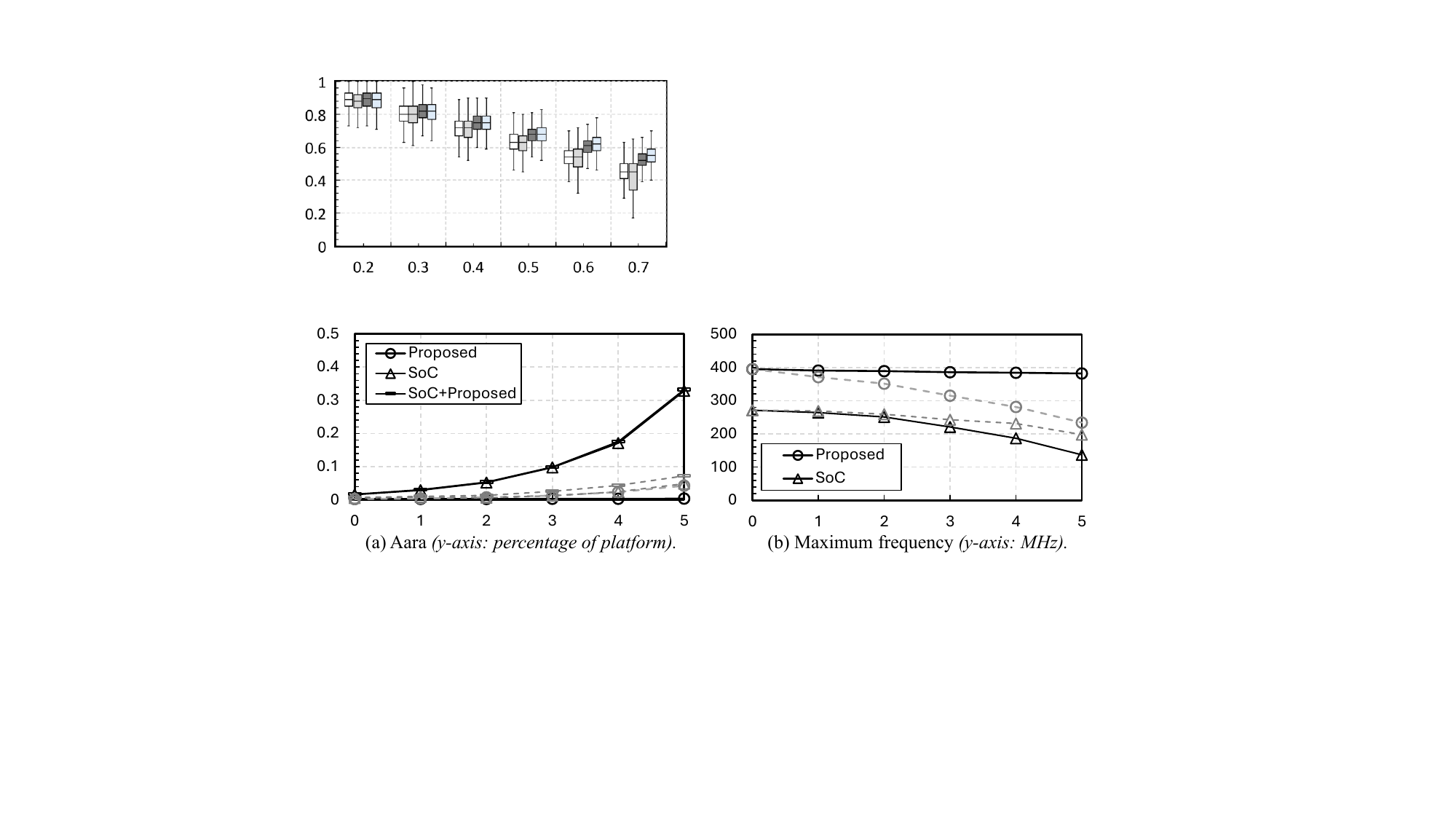}
    \caption{Scalability \small{\emph{($x$-axis: $\eta^{\text{core}}$ for black lines, $\eta^{\text{io}}$ for grey lines)}}.}
    \label{fig:scalability}
\end{figure}


\subsection{System-level Real-time Performance}
\label{sc:CaseStudy}


We evaluated the system-level real-time performance of the systems using an automotive case study. Specifically, we configured all systems with either 8 or 32 cores and executed two distinct sets of tasks: (i) 10 automotive safety tasks selected from the Renesas automotive use case database, such as CRC and RSA32; and (ii) 10 function tasks from the EEMBC benchmark, including FFT and speed calculation. 
We slightly modified these tasks to generate raw data externally, which was then transmitted to the system under test via an Ethernet connection (1 Gbps). 
The results were subsequently transmitted back using two FlexRay channels (10 Mbps).
The value of $\Phi_i$ for each I/O task was randomly chosen from the examples provided in Fig.~\ref{fig:vc}, with $V_{\text{max}}$ and $V_{\text{min}}$ set within the ranges of $[1, 100]$ and $[0, V_{\text{max}}]$, respectively. Each task was associated with a predefined period and an implicit deadline.
Additionally, we introduced synthetic workloads to potentially adjust the overall system utilisation. These workloads, akin to the function tasks, were derived from the EEMBC benchmark. However, since the execution time of a task can be influenced by various factors, the addition of synthetic workloads provided only a \emph{target utilisation} ($U$) for the system.

\parlabel{Experimental setup.}
We executed the systems 200 times under varying target utilisation $[50\%, 100\%]$, at intervals of 5\%.
Each experimental trial lasted 200 seconds, with $[10, 20]$ hardware faults randomly injected to the I/O controllers, delaying the I/Os' execution.
For a fair comparison,  we ensured that both the input data and the injected faults were identical across all tested systems.
We evaluated the systems using Success Ratio (SR) and Quality of I/O Services (QoS-I/O). 
The SR recorded the percentage of trials that were executed successfully (\ie, without deadline miss of any safety and function task), under a specified target utilisation. 
The QoS-I/O reports the average quality among all I/O tasks.

\parlabel{Obs. 5}. Deploying \name\ increased overall real-time performance than baseline systems with improved I/O quality.

As shown in Fig.~\ref{fig:exp_case}(a), (b), and (c), when the examined systems were configured using the same settings, our proposed system consistently outperformed the BSs in terms of SR and I/O-QoS.
The reasons for these improvements are two-fold:
(i) the approach of hardware/algorithm co-design enhances the efficiency of I/O management (see Sec.\ref{sc:Design});
(ii) the micro-architecture of \name~isolates ETSs and prevents the propagation of timing defects -- aligned to Obs. 3.

\parlabel{Obs. 6}.
With the increased complexity of the SoCs (\ie, the core numbers), all systems exhibited a decline in SR.

This pattern is given by the comparison between Fig.~\ref{fig:exp_case}(a) and Fig.~\ref{fig:exp_case}(b).
The reduction in SR stems from the additional on-chip interference and resource contention introduced by the growing number of processors and tasks - characteristic challenges in all multi/many-core systems, particularly when multiple tasks rely on the I/Os.
Even so, in 32-core systems, \name\ consistently achieved the highest SRs, highlighting the effectiveness and applicability of \name\ in complicated many-core SoCs. 
To further mitigate such decline, it is necessary to consider the NoC contentions and interference to facilitate an end-to-end optimisation~\cite{jiang2023nprc,restuccia2019your,pagani2020bandwidth}.

\subsection{System-level Scalability}
\label{sc:Scalability}
We acknowledge that the design scalability impacts its feasibility; hence, we performed a scalability analysis.

\parlabel{Experimental setup.} 
The same method described in Sec.\ref{sbsc:HardwareOverhead} was adopted to implement an SoC with a scaling number of processor cores and I/Os.
In addition, we introduced scaling factors: $\eta^{core}$ and $\eta^{io}$ to control the number ($2^\eta$) of cores and I/Os in the system.
As the number of cores increases, the associated I/O pool and L-SEs in a \name\ also expand to buffer and schedule more tasks. 
Conversely, an increase in the number of I/Os requires a corresponding increase in the number of {\name}s for additional device management.

First, we compared the scalability of area consumption between the legacy SoC (built upon standard ETH controllers) and the \name-based SoCs. 
The area consumption was normalised by the overall area of the platform.
We then examined the maximum frequency of the \name~and the SoC using varying $\eta^{core}$ and $\eta^{io}$.

\parlabel{Obs. 7.}  \name's area was almost  unaffected by $\eta^{\text{core}}$ and linearly scaled by $\eta^{\text{io}}$. 
Also, implementing \name\ in multi-/many-core SoCs did not impact the maximum performance.

As seen in Fig.\ref{fig:scalability}(a), when systems were scaled with $\eta^{\text{core}}$, the area consumption of \name\ was nearly constant.
When scaled with $\eta^{\text{io}}$, the area of \name\ increased linearly, demonstrating the micro-architecture's hardware scalability. 
Furthermore, as illustrated in Fig.~\ref{fig:scalability}(b), in all examined cases, \name's maximum frequency was always higher than the baseline SoCs.
This indicates that \name\ did not become a critical path of the entire SoC design, and therefore did not limit the maximum performance of the system.

\subsection{Summary}
Based on the observations given in all experiments, \name{} showed robustness by coping with timing defects under both synthesised systems and a real-world case study, improving timing predictability and accuracy compared to SOTA methods, without consuming significantly more resources than SOTA I/O controllers. In addition, the scalability of \name{} is demonstrated using real-world implementations, validating both the applicability and the effectiveness of the proposed solution on large-scale industrial applications.


\section{Related Work}
\label{sc:RelatedWork}
Research into real-time I/O design and verification mehtods in multi-/many-core systems has been investigated from the perspective of both the software and hardware.

\parlabel{Software-based I/O scheduling.}
Conventional safety-critical systems rely on an OS or a hypervisor to manage the I/Os~\cite{burns2001real}.
Hence, software approaches aimed at satisfying I/O timing demands have typically modified the OS kernel~\cite{kim2014integrated,betti2011real,kim2018supporting} or the hypervisor~\cite{missimer2016mixed,west2016virtualized} to integrate I/O-aware task scheduling and mapping strategies~\cite{casini2021latency}, avoiding unnecessary contentions and trying to execute I/O operations as accurately as possible. 
A considerable amount of work has contributed to this research domain by developing different task mapping and scheduling methods, \eg, reserving I/O bandwidths for software tasks according to their workload~\cite{abdallah2016contention,kim2018supporting}, and applying a time utility function to schedule tasks at or close to their ideal offsets~\cite{wang2004time,li2004adaptive}.
However, as explained in Sec.~\ref{sc:Introduction}, software solutions are becoming difficult to apply to modern systems, due to the never-ending increase in hardware and architectural complexity, \eg, many-core systems with mesh Network-on-Chips (NoCs).
Such complexity introduces significant uncertainty to I/O transmissions, \eg, communication delays and resource contentions through the system, making accurate control extremely tough to achieve from the software.

\parlabel{I/O-aware NoCs.}
At the NoC level, a commonly used method to improve real-time performance of I/O transactions is the duplication of communication channels.
For instance, such duplicated channels were particularly deployed for different types of I/O devices, \eg, cameras~\cite{parten2017multi} and display devices~\cite{chandra2014multi}.
Similarly, Jiang~\etal~\cite{jiang2023nprc,jiang2023many} introduced a bypass channel, termed ``I/O ring", facilitating more efficient data routing (one cycle level) between processors and all types of device. 
Distinct from merely duplicating communication channels, bandwidth reservation has also been studied to ensure service for particular tasks or transactions. 
This involves the interconnect assigning priorities to each hardware element and allocating bandwidth accordingly. 
For example, dedicated AXI controllers reserve bandwidth for high-priority devices~\cite{restuccia2019your, hebbache2018shedding, pagani2020bandwidth, jiang2022axi}. 
However, these methods solely focused on schedulability and average throughput of I/Os, without any attention paid to the key factors of timing accuracy and robustness.


\parlabel{Hardware-assisted I/O management.}
At the I/O level, dedicated assists have been developed and manufactured by different semiconductor vendors. For example, a programmable real-time unit has been developed by TI~\cite{PRUWeb} and a time processor unit by NXP~\cite{TPUWeb}.
These controllers are physically connected to I/O devices and handle I/O operations directly at the hardware level.
Since I/O management is deployed close to I/O devices, it effectively avoids I/O transmission uncertainty, and hence, provides the potential for achieving I/O timing predictability and accuracy. 
Leveraging the hardware support, Jiang~\etal~\cite{jiang2017gpiocp} and Zhao~\etal~\cite{zhao2020timing} presented different configuration methods that ordered the execution of I/O operations, specifying the start time of each I/O task (\ie a series of I/O operations). 
Similarly, Guerra and Fohler~\cite{guerrra2008gravitational} introduced a gravitational pendulum model that assigned higher ``gravity" to I/O tasks with higher importance to improve their unity.
These methods achieved a certain amount of timing predictability and accuracy for I/O control in their system models~\cite{guerrra2008gravitational} and in work using their assumptions~\cite{jiang2017gpiocp,zhao2020timing}.
However, as discussed in Sec.~\ref{sc:Introduction}, these approaches are fragile and would be entirely undermined by the interference of many timing defects.

\section{Conclusion} \label{sc:Conclusion}

This paper presents a real-time I/O controller named~\name{} using a hardware/software co-design approach. \name{} supports configurable ETSs and a two-level scheduler to establish the scheduling infrastructure that can prevent the propagation of timing defects between ETSs. Based on~\name{}, an ETS-based I/O scheduling method named~\nameS{} was constructed to mitigate the impact of timing defects, hence, improving both robustness and timing accuracy. Experimental results demonstrated that~\name{} outperforms SOTA methods without consuming significantly more hardware resources. 
In future work, we plan to tape out the \name{} with safety-critical micro-controllers in a real chip to examine its energy efficiency under different application scenarios.

\parlabel{Lessons learned.}
Different from the conventional solutions that attempted to manage I/O timing defects solely through hardware design (e.g., physical isolation~\cite{herber2014spatial}) or software techniques (e.g., virtualisation~\cite{herber2014spatial}), this work shows that using a hardware/algorithm co-designed approach, timing accuracy and robustness can be effectively improved with light overhead.
The constructed \name{} and \nameS{} 
provide key insights and effective means that foster a collaborative environment between the hardware and the schedule for real-time systems that neither discipline could achieve independently.

\section{Acknowledgement}
The authors would like to thank the anonymous reviewers for their insightful and helpful feedback. 
This work is supported by the National Natural Science Foundation of China (NSFC) under Grant 62472086, Grant 62072478, Grant 62302533, and Grant 62472086, Natural Science Foundation of Jiangsu Province under Grants BK20243042, Guangdong Basic and Applied Basic Research Foundation under Grant 2024A1515010240, Guangzhou Fundamental Research Funds under Grant SL2023A04J00996, the Start-up Research Fund of Southeast University under Grant No. RF1028624005, and the Fundamental Research Funds for the Central Universities. 
This paper was written during the birth of Zhe Jiang's first baby -- Zhe Jiang would like to thank his wife (Yanting Dai) and daughter (Anyu Jiang) for the joy they have brought.




\end{document}